\documentclass[11pt]{article}

\usepackage[margin=1in]{geometry}

\usepackage{amsmath,amssymb,amsfonts,amsthm}

\usepackage{graphicx}
\usepackage{subcaption}
\usepackage{float}
\usepackage[section]{placeins}
\usepackage{booktabs}

\usepackage{algorithm}
\usepackage{algpseudocode}

\usepackage{cite}
\usepackage{pifont}
\usepackage{textcomp}
\usepackage{xcolor}
\usepackage{hyperref}
\hypersetup{
    colorlinks=true,
    linkcolor=blue,
    citecolor=blue,
    urlcolor=blue
}

\allowdisplaybreaks

\newtheorem{problem}{Problem}
\newtheorem{lemma}{Lemma}
\newtheorem{corollary}{Corollary}
\newtheorem{remark}{Remark}
\newtheorem{definition}{Definition}
\newtheorem{theorem}{Theorem}
\newtheorem{example}{Example}
\newtheorem{proposition}{Proposition}
\newtheorem{assumption}{Assumption}

\def\BibTeX{{\rm B\kern-.05em{\sc i\kern-.025em b}\kern-.08em
    T\kern-.1667em\lower.7ex\hbox{E}\kern-.125emX}}
\begin{document}
\title{A Data-Enabled Primal-Dual Approach for Policy Learning with SDP Formulations}
\author{
\begin{tabular}{ccc}
Han Wang & Feiran Zhao & Florian D\"orfler
\end{tabular}\\[1mm]
Automatic Control Laboratory (IfA), ETH Z\"urich\\
\texttt{\{hanwang1,zhaofe,dorfler\}@ethz.ch}
}

\date{}
\maketitle

\begin{abstract}
This paper develops a data-enabled primal-dual framework for learning optimal control policies for unknown linear discrete-time systems from online data. The proposed approach views the data-dependent control synthesis problem as a time-varying semidefinite program (SDP) whose coefficients are recursively updated from online closed-loop measurements. Instead of repeatedly solving a full SDP as new data arrive, the policy is updated online through lightweight primal-dual iterations, each consisting of a linear equation solve and a projection onto the positive semidefinite cone. The framework applies to both direct and indirect data-driven formulations and covers a broad class of control objectives, including LQR, $\mathcal{H}_\infty$ control, and safety-critical control. To characterize the coupling between online optimization and closed-loop data generation, we introduce two data-dependent quantities: the Sim-to-Real Gap, which measures the mismatch between noisy and noiseless data-induced SDPs, and the Difference-of-Signal, which measures the temporal variation of the SDP coefficients. Under persistency of excitation, suitable SDP regularity conditions, and sufficiently slow data variation, we establish a local linear tracking result up to residual terms governed by the latter two quantities. A global ergodic convergence bound is also derived for arbitrary initialization. Numerical examples on LQR, $\mathcal{H}_\infty$ control, and safe exploration demonstrate that the proposed method can efficiently improve control performance from online data while accommodating SDP constraints beyond the well-explored LQR policy-gradient formulations.
\end{abstract}

\noindent\textbf{Keywords:} Data-driven control, adaptive control.

\section{Introduction}
\label{sec:introduction}
Data-driven control aims to synthesize controllers from measured trajectories, thereby reducing the dependence on explicit first-principles modeling. This paradigm is particularly attractive for modern control applications, where high-dimensional dynamics, uncertain environments, and repeated operation make it difficult to obtain an accurate model beforehand, while closed-loop data are continuously generated during deployment. A central challenge is therefore not only how to design a controller from a fixed batch of data, but also how to exploit newly collected data to recursively improve and adapt the learned control policy online.

\subsection{Literature review}

A key theoretical foundation of data-driven control is that measured trajectories can serve as an implicit representation of unknown dynamics, as formalized by Willems' fundamental lemma under persistency of excitation \cite{willems2005note}. This viewpoint has supported both predictive-control formulations such as DeePC \cite{coulson2019data,coulson2019regularized,berberich2021data} and data-dependent linear matrix inequalities methods for stabilization, LQR, and robust control \cite{de2019formulas,berberich2020robust,dorfler2023certainty}, with related informativity results clarifying when a dataset is sufficient to certify control properties \cite{van2020data}. These methods are typically used in a batch manner: a controller is synthesized from a fixed dataset, while data collected during subsequent operation are not directly used to refine the policy. This motivates a recursive learning viewpoint, where the policy is updated as new closed-loop data arrive, without repeatedly solving the full data-dependent synthesis problem from scratch.

Recent online policy-learning methods partially address this issue by combining closed-loop data with recursive controller updates, especially for LQR problems \cite{dean2018regret,rantzer2021minimax,zhao2025data,zhao2025policy}. These methods provide an appealing mechanism for improving optimal control policies from online data and have also been validated in robotic applications \cite{persson2026adaptive}. Their algorithmic structure, however, is often tailored to LQR-type objectives for which policy-gradient or policy-improvement steps admit tractable expressions. For more general specifications, such as $\mathcal H_{\infty}$ control or safety-critical control with SDP constraints, such closed-form updates are not directly available \cite{zheng2026benign}. This motivates an alternative route. Instead of deriving a policy gradient formula for each specific control objective, we treat the corresponding data-dependent SDP itself as the object to be learned. As new closed-loop data update the coefficients of this SDP, the control policy can be updated by recursively tracking its time-varying optimizer. 

This perspective is closely related to online and time-varying optimization, where decisions are updated sequentially as new information becomes available. Classical online convex optimization studies regret with respect to a fixed comparator \cite{zinkevich2003online}, while dynamic settings require tracking time-varying optimizers and lead to bounds depending on path length, temporal variation, or related measures of nonstationarity \cite{hall2015online,zhang2018adaptive,zhao2020dynamic}. For structured constrained problems, online alternating direction methods \cite{wang2012online}, online proximal-ADMM \cite{zhang2021online}, and more general time-varying convex optimization schemes \cite{simonetto2020time} provide primal-dual mechanisms for updating decisions with limited per-step computation. {Such approaches are particularly attractive in control applications, where repeatedly solving a full SDP to optimality can be computationally demanding~\cite{helmberg2000semidefinite}, and standard interior-point methods are generally difficult to warm-start \cite{skajaa2015warmstarting}. Moreover, exact solutions of data-dependent SDPs may become overly sensitive to noisy and continuously changing data, whereas incremental primal-dual updates naturally provide a gradual adaptation mechanism that is more compatible with maintaining closed-loop robust stability.}

The data-enabled policy-learning problem considered here differs from standard online optimization in an important way. In both direct and indirect data-driven parameterizations, the time variation of the underlying SDP is not specified by an exogenous sequence of costs or constraints, but is generated by the closed-loop trajectory through recursively updated data matrices. In the direct setting, the SDP coefficients depend explicitly on online trajectory data, whereas in the indirect setting they evolve through continuously updated system estimates constructed from the data. Consequently, the optimizer drift is coupled with the plant dynamics, probing signal, noise level, and persistency of excitation. This calls for a convergence analysis that links online primal-dual optimization with data-dependent quantities specific to control. In this paper, we develop such a framework by introducing a data-enabled primal-dual method for tracking time-varying SDPs through a small number of online primal-dual iterations rather than repeatedly solving each SDP to full optimality, and by quantifying its convergence through the Sim-to-Real Gap and Difference-of-Signal, which capture the quality and temporal evolution of the online data, respectively.

\subsection{Contributions}

To bridge the gap, we propose a unified data-enabled primal-dual framework for recursively learning optimal control policies through SDP formulations. The framework encompasses both direct and indirect data-driven parameterizations by viewing the resulting data-dependent control SDP as a time-varying optimization problem whose constraint matrices are updated from closed-loop data. Rather than repeatedly solving each SDP to full optimality, the proposed method tracks its solution online using lightweight primal-dual iterations. Our contributions are summarized as follows:

\begin{itemize}
    \item We develop a recursive policy-learning framework for optimal control problems admitting SDP formulations. Unlike existing online data-driven policy-search methods that are mainly tailored to the LQR objective, the proposed data-enabled primal-dual method applies to a broader class of control problems, including LQR, $\mathcal H_{\infty}$ optimal control, and safety-critical control with state constraints. The method therefore provides a unified online optimization viewpoint for learning optimal control policies from data.

    \item We propose a computationally efficient online algorithm for tracking the solution of time-varying data-driven SDPs. At each time step, newly collected closed-loop data update the SDP coefficients, and the policy is updated through primal-dual iterations involving only a linear equation solve and a projection onto the positive semidefinite cone. This avoids repeatedly solving a full SDP from scratch and makes the approach suitable for recursive implementation.

    \item We provide convergence analysis for both direct and indirect data-driven formulations. The analysis identifies two key quantities governing online performance: the sim-to-real gap, which quantifies the accuracy of the data-dependent representation, and the difference-of-signal, which measures how fast the underlying SDP changes as new data arrive. These quantities explicitly connect the convergence behavior of the primal-dual algorithm to the quality and evolution of the closed-loop data.

    \item We establish both local and global convergence guarantees for the resulting time-varying SDP tracking problem. Under suitable SDP regularity conditions, persistency of excitation, and sufficiently small DOS, we prove a local linear tracking result showing that the primal-dual iterates approach the time-varying optimal solution up to residual terms that scale linearly with DOS and SRG. For more general initializations, we further derive a global ergodic convergence bound.

    \item We demonstrate the effectiveness of the proposed method on several control tasks. For the LQR benchmark, our method achieves performance comparable to state-of-the-art data-enabled policy optimization methods \cite{zhao2025data}, while exhibiting robustness when the data are corrupted by large process noise. Beyond LQR, we validate the method on a $14$-dimensional $\mathcal H_{\infty}$ control problem, which is challenging for existing policy-gradient-based approaches. We further show that the same framework can accommodate safety-critical control problems where state constraints are imposed through SDP conditions.
\end{itemize}

Section \ref{sec:formulation} provides the problem formulation. The main algorithm is presented in Section \ref{sec:algorithm}. The convergence analysis is conducted in Section \ref{sec:analysis}. Section \ref{sec:simulation} presents simulation results. All proofs are given in the Appendix.

\subsection{Notation}

For a matrix $M$, $\|M\|_F$, $\|M\|_2$, and $\underline{\sigma}(M)$ denote its Frobenius norm, induced spectral norm, and smallest singular value, respectively. The column-wise vectorization of $M$ is denoted by $\operatorname{vec}(M)$, and $\operatorname{vec}^{-1}(\cdot)$ denotes the inverse operation with dimension clear from the context. The projection onto the positive semidefinite cone is denoted by $\Pi_{\mathbb{S}_+}(\cdot)$. The indicator function of a set $\mathcal C$ is denoted by $\mathcal I_{\mathcal C}(\cdot)$. For a vector $\omega\in \mathbb{R}^\omega$ and a scalar $r>0$, $\mathcal{B}(\omega,r):=\{x\in \mathbb{R}^\omega:\|x-\omega\|_F\le r\}$.

\section{Problem Formulation}\label{sec:formulation}

Consider the following linear discrete-time system:
\begin{equation}\label{eq:system}
    x_{t+1}=Ax_t+Bu_t+w_t,
\end{equation}
where $t\in\mathbb{N}$, $x_t\in\mathbb{R}^n$ denotes the state at time $t$, $u_t\in\mathbb{R}^m$ is the control input, and $w_t\in\mathbb{R}^n$ is the process noise. We assume that the system matrices $A$ and $B$ are unknown. Instead, input and state measurements of the system is assumed available.

For the system \eqref{eq:system}, a sequence of states, inputs, noises and successor states measures is defined by:
\begin{equation}\label{eq:data}
    \begin{aligned}
        X_{0,t}&:=[x_0\quad x_1\quad \ldots\quad x_{t-1}]\in\mathbb{R}^{n\times t}\\
        U_{0,t}&:=[u_0\quad u_1\quad \ldots \quad u_{t-1}]\in\mathbb{R}^{m\times t}\\
        W_{0,t}&:=[w_0\quad w_1\quad \ldots \quad w_{t-1}]\in\mathbb{R}^{n\times t}\\
        X_{1,t}&:=[x_1\quad x_2\quad \ldots \quad x_{t}]\in\mathbb{R}^{n\times t},\\
        D_t&:=[X_{0,t}\quad X_{1,t}\quad U_{0,t}].
    \end{aligned}
\end{equation}
Note that each column of the data matrices do not need to be consecutive in time. It is only required that the elements that are in the same columns of $X_{0,t_0},U_{0,t_0},W_{0,t_0}$ are sampled at the same time, and that in $X_1$ be the successor state.
These matrices satisfy the subspace relation
\begin{equation}\label{eq:data-relation}
    X_{1,t}=AX_{0,t}+BU_{0,t}+W_{0,t}.
\end{equation}

{The goal is to learn the optimal linear control policy $u^*_t=K^*x_t$ from the state and input measurements $D_t$, where the optimal control gain $K^*\in \mathbb{R}^{m\times n}$ is the optimal solution of a policy optimization problem. Take the LQR as an example, the problem is given by \cite{dorfler2023certainty}:
\begin{equation}\label{eq:po}
    \begin{aligned}
        [K^*,\Sigma^*]=\mathop{\arg\min}_{K,\Sigma}~&\mathrm{Tr}[(Q+K^\top RK)\Sigma]\\
        \mathrm{subject~to}~&\Sigma = I_n+(A+BK)\Sigma (A+BK)^\top 
    \end{aligned}
\end{equation}
}
where $\Sigma\in \mathbb{R}^{n\times n}$ is the Gramian matrix. 

Two challenges are exhibited in the learning problem: i) As the noise $W_{0,t}$ is unknown and corrupting the estimate of the system matrices $(A,B)$, it is usually unlikely to learn the optimal gain $K^*$ with finite measures $t<\infty$; ii) the policy optimization problem is usually nonconvex. Looking at \eqref{eq:po} both the cost function and the constraint set are nonconvex.

To address these challenges, we propose a learning scheme that consists of two main strategies:\\ 
\textbf{Convex reformulation}: Instead of working on the nonconvex policy optimization problem, we reformulate it into a convex semi-definite program. {Take the LQR problem \eqref{eq:po} as an example, the SDP formulation is given by \cite{dorfler2023certainty}:
\begin{equation}\label{eq:model-lqr}
    \begin{aligned}
        [\Sigma^*,Y^*,L^*]=\mathop{\arg\min}_{\Sigma,Y,L}~&\mathrm{Tr}(Q\Sigma)+\mathrm{Tr}(RL)\\
        \mathrm{subject~to}~&\begin{bmatrix}
            \Sigma-I_n&A\Sigma+BY\\\star &\Sigma
        \end{bmatrix}\succeq 0\\
        &\begin{bmatrix}
            L&Y\\Y^\top& \Sigma
        \end{bmatrix}\succeq 0
    \end{aligned}
\end{equation}
and the optimal control gain is $K^*=Y^*(\Sigma^*)^{-1}$.
}

It is known that a large amount of optimal control problems for linear systems, including LQR, $\mathcal{H}_\infty$ and safety-critical control, can be formulated as SDPs \cite{boyd1994linear}.\\
\textbf{Policy Learning}: Denote the learned policy at time $t$ by $K_t$. Our goal is to design a policy learning mechanism $\mathcal{P}$ such that
\begin{equation}\label{eq:learning-mechanism}
    K_t=\mathcal{P}(K_{t-1};D_{t}),\quad t=t_0,\ldots
\end{equation}
where $K_{t_0}$ denotes the initialized control gain with offline data $D_{t_0}$. The system runs in closed-loop as $x_{t+1}=Ax_t+Bu_t+w_t$, where $u_t=K_tx_t+e_t$. The probing noise $e_t\in\mathbb{R}^m$ is introduced to enhance exploration, while ensuring  persistent of excitation \cite{faradonbeh2020input}. Then, the set of data consists of two parts at each time $t$: i) offline data ${X_{0,t_0}, U_{0,t_0}, X_{1,t_0}}$ and ii) online closed-loop data ${x_t, u_t}$. The desired outcome of the adaptive control mechanism is that, the learned gain $K_t$ converges to the optimal one $K^*$ as time evolves and in presence of disturbances.

The key problems in the above learning scheme are formalized as follows.

{\begin{problem}[Recursive Data-Driven Policy Learning]\label{prob:learning}
    For the unknown system matrices $(A,B)$, use the data $D_t$ to construct a \emph{data-driven SDP}. Further, design a recursive update mechanism $\mathcal{P}$ such that the gain $K_t$ in \eqref{eq:learning-mechanism} tracks the optimal gain $K^*$ as new closed-loop data become available.
\end{problem}
}

In the rest of this section, we provide two well-known data-driven parameterization in the literature, namely, \emph{direct} and \emph{indirect} data-driven control, to construct the data-driven SDP. 

\subsection{Indirect Data-Driven SDP}\label{sec:indirect}

The first method is termed as \emph{indirect} data-driven. The key idea is two steps: i) use data $D_t$ to find an estimate $(\hat A_t, \hat B_t)$ of the unknown matrices $(A,B)$; ii) formulate the SDP with $(\hat A_t,\hat B_t)$. Based on the subspace relation \eqref{eq:data-relation} and disregard the noise measures $W_{0,t}$, $(\hat A_t,\hat B_t)$ can be estimated by solving the following ordinary least square problem 
\begin{equation}\label{eq:ols}
    (\hat A_t,\hat B_t)=\mathop{\arg\min}_{A,B}~\|\overline{X}_{1,t}-A \overline{X}_{0,t}-B\overline{U}_{0,t}\|_F^2
\end{equation}
where we have adopted the sample covariance matrices:
\begin{equation}\label{eq:covariance}
\begin{aligned}
\overline X_{1,t}&:=X_{1,t}\begin{bmatrix}
    U_{0,t}\\X_{0,t}
\end{bmatrix}^\top /t\quad     &&\overline X_{0,t}:=X_{0,t}\begin{bmatrix}
    U_{0,t}\\X_{0,t}
\end{bmatrix}^\top /t\\
\overline U_{0,t}&:=U_{0,t}\begin{bmatrix}
    U_{0,t}\\X_{0,t}
\end{bmatrix}^\top /t\quad 
    &&\overline W_{0,t}:= W_{0,t}\begin{bmatrix}
    U_{0,t}\\X_{0,t}
\end{bmatrix}^\top/t\\
\end{aligned}
\end{equation}

When the data is \emph{persistently exciting} \cite{willems2005note}, i.e., $\mathrm{rank}\left(\begin{bmatrix}
    U_{0,t}\\X_{0,t}
\end{bmatrix}\right)=m+n$, the solution of \eqref{eq:ols} is unique. {Taking the LQR problem \eqref{eq:model-lqr} as an example, the indirect data-driven SDP is given by 
\begin{equation}\label{eq:indirect-lqr}
    \begin{aligned}
        \mathop{\min}_{\Sigma,Y,L}~&\mathrm{Tr}(Q\Sigma)+\mathrm{Tr}(RL)\\
        \mathrm{subject~to}~&\begin{bmatrix}
            \Sigma-I_n&\hat A_t\Sigma+\hat B_tY\\\star &\Sigma
        \end{bmatrix}\succeq 0\\
        &\begin{bmatrix}
            L&Y\\Y^\top& \Sigma
        \end{bmatrix}\succeq 0\\
        &(\hat A_t,\hat B_t)=\mathop{\arg\min}_{A,B}~\|\overline{X}_{1,t}-A \overline{X}_{0,t}-B\overline{U}_{0,t}\|_F^2
    \end{aligned}
\end{equation}
This is a bilevel problem, but the inner one can be explicitly solved as $[\hat A_t\quad \hat B_t]=\overline{X}_{1,t}\begin{bmatrix}
    \overline{X}_{0,t}\\\overline{U}_{0,t}
\end{bmatrix}^\dag$. From this expression, it can also be seen that data enters the problem nonlinearly.
}

\subsection{Direct Data-Driven SDP}\label{sec:direct}

One notable feature of the linear policy optimization problems is that the system matrices $(A,B)$ always appear through the closed-loop matrix $A+BK$ in the optimization problem, for example, \eqref{eq:po}. Recent advances \cite{de2019formulas,zhao2025data} show that $A+BK$ can be \emph{directly} expressed using the data $D_t$ by parameterizing the control policy as
\begin{equation}\label{eq:policy}
    \begin{bmatrix}
        K_t\\I_n
    \end{bmatrix}=\Phi_t V_t,
\end{equation}
where $V_t \in \mathbb{R}^{(n+m)\times n}$ is a new decision variable. Here, $\Phi_t$ is constructed from the data matrices and is defined by
\begin{equation}
\Phi_t:=\begin{bmatrix}
    U_{0,t}\\X_{0,t}
\end{bmatrix}\begin{bmatrix}
    U_{0,t}\\X_{0,t}
\end{bmatrix}^\top/t.
\end{equation}
The matrix $\Phi_t$ is full rank if the data is persistently exciting. Using the parameterization \eqref{eq:policy} and the input-state relation \eqref{eq:data-relation}, the closed-loop matrix can be written as
\begin{equation}
    A+BK_t=[B\quad A]\begin{bmatrix}
        K_t\\I_n
    \end{bmatrix}=[B\quad A]\Phi_tV_t=(\overline{X}_{1,t}-\overline{W}_{0,t})V_t.
\end{equation}

By disregarding the unknown disturbance term $\overline{W}_{0,t}$ in the parameterized closed-loop matrix, we obtain a data-driven parameterization of $A + B K_t=\overline{X}_{1,t}V_t$. This parameterization can then be used to formulate the data-driven SDP by replacing $(A,B)$ with data $D_t$. {For example, the direct data-driven SDP for the LQR problem \eqref{eq:po} is given by
\begin{equation}\label{eq:dee-lqr}
    \begin{aligned}
        (X_t,Y_t)=\mathop{\arg\min}_{X,Y}~&\mathrm{trace}(Q\overline{X}_{0,t}Y)+\mathrm{trace}(X)\\
        \mathrm{subject~to}~&
        \begin{bmatrix}
            \overline{X}_{0,t}Y-I_n&\overline{X}_{1,t}Y\\
            \star&\overline{X}_{0,t}Y
        \end{bmatrix}\succeq 0,\\
        &
        \begin{bmatrix}
            X&\sqrt{R}\overline{U}_{0,t}Y\\
            \star & \overline{X}_{0,t}Y
        \end{bmatrix}\succeq 0.
    \end{aligned}
\end{equation}
}

In this program, the original variable $V$ is replaced by $X$ and $Y$ to ensure convexity. Different from the indirect SDP, data enters linearly in the direct SDP. The optimal gain is given by $K_t = \overline{U}_{0,t} Y_t (\overline{X}_{0,t} Y_t)^{-1}$, and the controllability Gramian is given by $\Sigma_t = \overline{X}_{0,t} Y_t$. It is known that \eqref{eq:dee-lqr} is equivalent to the corresponding indirect data-driven problem.

\begin{remark}
In \cite{de2019formulas} and \cite{dorfler2023certainty}, a simple formulation is provided based on the data matrices \eqref{eq:data} rather than the covariance matrices \eqref{eq:covariance}. Compared with this formulation, \eqref{eq:dee-lqr} utilizes empirical covariance matrices and offers several advantages, particularly for SDP formulations. First, the dimension of the decision variable $V_t$ does not increase with time. This is a notable feature for SDPs, as the dimension of the constraints remains fixed, in contrast to the formulations in \cite{de2019formulas,dorfler2023certainty}. Second, the covariance parameterization admits no nullspace in \eqref{eq:policy} and thus it achieves equivalence to the indirect certainty-equivalence problem without requiring explicit regularization. For further discussion, readers are referred to \cite[Section III]{zhao2025data}. 
\end{remark}

For both direct and indirect cases, we propose the data-driven SDP for the policy optimization problem takes the form
\begin{equation}\label{eq:opt-sdp}\tag{SDP-PO}
\begin{aligned}
    z^*=\mathop{\arg\min}_{z\in \mathbb{R}^N}~&c^\top z\\
    \mathrm{subject~to~}&F_i(D_t;z)\succeq 0,\quad i\in \{1,\ldots,l\}
\end{aligned}
\end{equation}
where $c\in \mathbb{R}^N$ is a known vector, and every $F_i(D_t;z)\in \mathbb{R}^{q_i\times q_i}$ is a squared matrix that parameterized by data $D_t$, and is linear in the decision variables $z$. The optimal control gain can be recovered from $z^*$ by a transformation $K^*=\mathcal{M}(z^*)$, where $\mathcal{M}:\mathbb{R}^N \to \mathbb{R}^{m\times n}$.


\section{Data-Enabled Primal-Dual Algorithm}\label{sec:algorithm}
In this section, we propose an policy learning mechanism $\mathcal{P}$ to solve Problem \ref{prob:learning}, which is summarized in Algorithm \ref{al:online-control-first-order}.

\begin{algorithm}
\caption{Data-Enabled Primal-Dual Algorithm}
\label{al:online-control-first-order}
\begin{algorithmic}[1]
\State Sample offline data $D_{t_0}$. Initialize the decision variables $z_{t_0}$, $S_{i,t_0}$, and $\Lambda_{i,t_0}$, $i\in\{1,\ldots,l\}$, by solving the offline data-driven SDP. Recover the offline control policy $K_{t_0}$ from $z_{t_0}$.
    \For{$t=t_0+1,\ldots$}
    \State $z$-update: solve
    \[
    z_{t}=\mathop{\arg\min}_{z}~c^\top z+\sum_{i=1}^l\frac{\rho}{2}\left\|S_{i,t-1}-F_i(D_{t};z)+\frac{1}{\rho}\Lambda_{i,t-1}\right\|_F^2.
    \]
    \State $S_i$-update:
    \[
    S_{i,t}=\Pi_{\mathbb{S}_{+}}\left( F_i(D_{t};z_{t})-\frac{1}{\rho}\Lambda_{i,t-1}\right), \quad i\in\{1,\ldots,l\}.
    \]
    \State $\Lambda_i$-update:
    \[
    \Lambda_{i,t}=\Lambda_{i,t-1}+\rho\left(F_i(D_{t};z_{t})-S_{i,t}\right), \quad i=\{1,\ldots,l\}.
    \]
    \State Policy update: recover $K_{t}$ from $z_{t}$ using $D_t$.
    \State Apply the control input $u_{t}=K_{t}x_{t}+e_{t}$, sample the new state $x_{t+1}$ from the closed-loop system
    \[
    x_{t+1}=Ax_t+BK_tx_t+Be_t+w_t,
    \]
    and update $D_{t+1}$ using $D_t$, $x_{t+1}$ and $u_t$.
    \EndFor
\end{algorithmic}
\end{algorithm}

We start from the following reformulation of \eqref{eq:opt-sdp}:
\begin{equation}\label{eq:problem}
    \begin{aligned}
        \min_{z,S_i}~&c^\top z+\sum_{i=1}^l\mathbb{I}_{\mathbb{S}_+}(S_i)\\
        \mathrm{subject~to}~&S_i=F_i(D_t;z), \quad i\in\{1,\ldots,l\}.
    \end{aligned}
\end{equation}
Here, we lift the SDP constraints by introducing auxiliary variables $S_i$, $i\in\{1,\ldots,l\}$, and using the indicator function $\mathbb{I}_{\mathbb{S}_+}(\cdot)$, which equals zero for positive semidefinite matrices and positive infinity otherwise.

The scaled-form augmented Lagrangian for \eqref{eq:problem} is given by
\begin{equation}
    \mathcal{L}_\rho
    =
    c^\top z
    +
    \sum_{i=1}^l \mathbb{I}_{\mathbb{S}_+}(S_i)
    +
    \sum_{i=1}^l\frac{\rho}{2}\left\|S_i-F_i(D_t;z)+\frac{1}{\rho}\Lambda_i\right\|_F^2.
\end{equation}
\textbf{Steps 3--5} of Algorithm \ref{al:online-control-first-order} follow the operator splitting method \cite{wen2010alternating}. The decision variables are split into two groups: $z$ and $S_i$. The initial decision variables $z_{t_0}$, $S_{i,t_0}$, and $\Lambda_{i,t_0}$, $i\in\{1,\ldots,l\}$, are obtained by solving \eqref{eq:dee-sdp2} with offline data $D_{t_0}$.

In \textbf{Step 3}, the primal variable $z$ is updated by solving an unconstrained quadratic programming problem with fixed $S_{i,t-1}$ and $\Lambda_{i,t-1}$. This is equivalent to solving a linear system in $z$. This step can be very efficient in practice, especially when some of the matrices can be preprocessed.

In \textbf{Step 4}, the auxiliary variables $S_i$, $i\in\{1,\ldots,l\}$, are updated by projecting onto the positive semidefinite cone. This follows from minimizing the augmented Lagrangian:
\begin{equation}\label{eq:16}
\begin{aligned}
    S_{i,t}
    =
    \mathop{\arg\min}_{S_i}~  
    \mathbb{I}_{\mathbb{S}_+}(S_i)
    +
    \frac{\rho}{2}\left\|S_{i}-F_{i}(D_{t};z_{t})+\frac{1}{\rho}\Lambda_{i,t-1}\right\|_F^2.
\end{aligned}
\end{equation}
The optimization problem is decomposable with respect to each $S_i$. Each individual optimization problem  can be solved by an eigenvalue decomposition, and thresholding the eigenvalue decomposition of $F_i(D_{t};z_{t})-\frac{1}{\rho}\Lambda_{i,t-1}$ gives
\begin{equation*}
    F_i(D_{t};z_{t})-\frac{1}{\rho}\Lambda_{i,t}= V\Sigma V^\top.
\end{equation*}
Using the orthogonal invariance of $\|\cdot\|_F$, \eqref{eq:16} can be rewritten as
\begin{equation}\label{eq:svd-1}
\begin{aligned}
    S_{i,t}=\mathop{\arg\min}_{S_i}~&
    \left\|V^\top S_{i}V-\Sigma\right\|_F^2\\
    \mathrm{subject~to}~&S_i\succeq 0.
\end{aligned}
\end{equation}
Since $V^\top S_i V \succeq 0$ if and only if $S_i \succeq 0$, the optimal solution of \eqref{eq:svd-1} is given by
\begin{equation}
    V^\top S_i V=\Pi_{\mathbb{S}_+}(\Sigma),
\end{equation}

In \textbf{Step 5}, the dual variables $\Lambda_{i,t}$ are updated by gradient ascent, where $\rho$ is the augmented Lagrangian parameter.

In \textbf{Step 6}, the control policy $K_{t}$ is recovered from $z_{t}$.

In \textbf{Step 7}, the real system applies the control input $u_t=K_tx_t+e_t$, where $e_t\in\mathbb{R}^m$ is a probing noise that encourages exploration and ensures persistency of excitation of the data matrices.

We are interested in analyzing: i) the convergence of $z_t$, which reflects the convergence of the control gain $K_t$; and ii) the constraint satisfaction
$F_i([X_{0,t}\quad X_{1,t}-W_{0,t}\quad U_{0,t}];z_{t+1})\succeq 0, \quad i\in\{1,\ldots,l\}$
with respect to the noiseless data.

\section{Convergence Analysis}\label{sec:analysis}

For analysis purposes, we further reformulate \eqref{eq:opt-sdp} into the following standard form:
\begin{equation}\label{eq:dee-sdp2}
    \begin{aligned}
        \min_{z,s}~&f_1(z)+f_2(s)\\
        \mathrm{subject~to}~&\underbrace{A_1(D_t)}_{:=A_{1,t}}z+A_2s=b.
    \end{aligned}
\end{equation}
In the above formulation,
$s:=
[\mathrm{vec}(S_1)^\top\quad 
\ldots\quad
\mathrm{vec}(S_l)^\top]^\top
\in\mathbb{R}^q$, $f_1(z)$ corresponds to the cost term $c^\top z$, and $f_2(s)$ corresponds to the indicator function term $\sum_{i=1}^l \mathbb{I}_{\mathbb{S}_+}(S_i)$. The associated dual variable is $\lambda:=
\begin{bmatrix}
\mathrm{vec}(\Lambda_1)^\top&
\mathrm{vec}(\Lambda_2)^\top&
\ldots&
\mathrm{vec}(\Lambda_l)^\top
\end{bmatrix}^\top.$
The matrices
\begin{equation}\label{eq:noiseless-map}
\begin{aligned}
    A_{1,t}
    &:=
    A_1([{X}_{0,t}\quad {U}_{0,t}\quad {X}_{1,t}])
    \in \mathbb{R}^{q\times N},\\
    A_{1,t}^{\mathrm{nl}}
    &:=
    A_1([{X}_{0,t}\quad {U}_{0,t} \quad {X}_{1,t}-{W}_{0,t}])
    \in \mathbb{R}^{q\times N}
\end{aligned}
\end{equation}
denote the data-induced affine constraint maps parameterized by the raw data and the noiseless data, respectively.

From the SDP structure in \eqref{eq:opt-sdp}, we have $A_2=-I$, with dimension consistent with the constraint matrices $F_i(D_t;z)$, and $b$ is a constant vector arising from the constant terms in $F_i(D_t;z)$. It should be noted that $A_2$ and $b$ do not vary with time due to the structure of the adaptive control problems considered here: the data matrices always multiply the decision variable $z$. Nevertheless, the analysis in this section can be extended to the case where $A_2$ and $b$ are also time-varying.

{For notational convenience, denote
\begin{equation}\label{eq:kkt}
    \omega^*_t:=
    \begin{bmatrix}
        z_t^*\\s_t^*\\\lambda_t^*
    \end{bmatrix}
    :=
    \begin{bmatrix}
        z^*(A_{1,t})\\
        s^*(A_{1,t})\\
        \lambda^*(A_{1,t})
    \end{bmatrix}
\end{equation}
as the KKT point of \eqref{eq:dee-sdp2} associated with $A_{1,t}$ to which the iterates of Algorithm~\ref{al:online-control-first-order} converge when $A_{1,t}$ is held fixed. The existence of such a KKT point is ensured by \cite{wen2010alternating}.} We first make the following key assumptions and then investigate several properties of the SDP problem \eqref{eq:dee-sdp2}.

The first assumption concerns the linear structure of $A_{1,t}$.
\begin{assumption}[Linearity in Data]\label{ass:linear}
    For the direct case in Section \ref{sec:direct}, there exists a bounded linear operator $\mathcal{L}^d$ such that $A_{1,t}=\mathcal{L}^d(\overline{X}_{0,t}, \overline{X}_{1,t},\overline{U}_{0,t})$. For the indirect case, there exists a bounded linear operator $\mathcal{L}^i$ such that $A_{1,t}=\mathcal{L}^i(\hat A_t,\hat B_t)$. \hfill 
\end{assumption}
The two cases are discussed separately because $(\hat A_t,\hat B_t)$ is not linear in the data $D_t$. The two LQR formulations \eqref{eq:indirect-lqr} and \eqref{eq:dee-lqr} nicely illustrate Assumption \ref{ass:linear}.

The following assumption concerns the rank of $A_{1,t}$.
\begin{assumption}[Uniform Rank Nondegeneracy]\label{ass:pe}
    For any $t\ge t_0$, the matrix $A_{1,t}$ has full column rank in the sense that $\underline{\sigma}(A_{1,t})\ge \underline{\gamma}$ for some $\underline{\gamma}>0$, where $\underline{\sigma}(\cdot)$ denotes the smallest singular value of a matrix. \hfill 
\end{assumption}

For our problem \eqref{eq:opt-sdp}, Assumption \ref{ass:pe} captures the persistent excitation of data, in the sense that the sample covariance matrices $\overline{X}_{0,t}$, $\overline{X}_{1,t}$, and $\overline{U}_{0,t}$ have sufficient rank, with positive smallest nonzero singular values. This assumption is ensured with probability one under the control input $u_t=K_tx_t+e_t$ if the probing noise $e_t$ is properly designed \cite{faradonbeh2020input}. In the following, we take the LQR problem as an example to show that both Assumptions \ref{ass:linear} and \ref{ass:pe} can be ensured.

\begin{example}[LQR Control]\label{rem:lqr}
    Consider the LQR problem as an example. For the direct case \eqref{eq:dee-lqr}, the matrices
    \[
    \begin{bmatrix}
    \overline{X}_{0,t}Y-I_n&\overline{X}_{1,t}Y\\
    \star&\overline{X}_{0,t}Y
    \end{bmatrix}
    \quad \text{and} \quad
    \begin{bmatrix}
        X&\sqrt{R}\overline{U}_{0,t}Y\\
        \star & \overline{X}_{0,t}Y
    \end{bmatrix}
    \]
    can be stacked and expressed in the form
\begin{equation*}
    \underbrace{\begin{bmatrix}
        0_{n^2\times n^2}&I_n \otimes \overline{X}_{0,t}\\
        0_{n^2\times n^2}&I_n\otimes \overline{X}_{1,t}\\
        0_{n^2\times n^2}&I_n\otimes \overline{X}_{0,t}\\
        I_{n^2}&0_{n^2\times n^2}\\
        0_{n^2\times n^2}&I_n\otimes\sqrt{R}\overline{U}_{0,t}\\
        0_{n^2\times n^2}&I_n\otimes \overline{X}_{0,t}
    \end{bmatrix}}_{A_{1,t}}
    \underbrace{\begin{bmatrix}
        \mathrm{vec}(X)\\
        \mathrm{vec}(Y)
    \end{bmatrix}}_{z}.
\end{equation*}
The matrix $A_{1,t}$ is clearly linear in the data matrices, which shows that Assumption \ref{ass:linear} holds. We can see that $\underline{\sigma}(A_{1,t})>0$ if and only if $[
    3\overline{X}_{0,t}^\top \quad  \overline{X}_{1,t}^\top \quad \sqrt{R}\overline{U}_{0,t}^\top ]^\top$
has full column rank. Since $R\succ 0$, the latter is equivalent to
\begin{equation}\label{eq:null-space}
    \mathcal{N}(\overline{X}_{0,t})
    \cap
    \mathcal{N}(\overline{X}_{1,t})
    \cap
    \mathcal{N}(\overline{U}_{0,t})
    =
    \{0\}.
\end{equation}
Relation \eqref{eq:null-space} holds under the persistency-of-excitation condition in Assumption \ref{ass:pe}, which implies that $\Phi_{t}=D_{0,t}D_{0,t}^\top/t$ has full rank. \hfill
\end{example}

The following assumption concerns regularity of \eqref{eq:dee-sdp2}.

\begin{assumption}[Strong Regularity]\label{ass:regularity}
There exists a nonempty compact set $\mathcal{A}\subset \mathbb{R}^{q\times N}$ such that
\begin{equation} A_{1,t},A_{1,t}^{\mathrm{nl}} \in \mathcal{A},\quad \forall t\ge t_0.
\end{equation}
Moreover, for every $\tilde A_1\in \mathcal{A}$, strict complementarity and primal-dual nondegeneracy hold at the KKT point $[z^\star(\tilde A_1),s^*(\tilde A_1),\lambda^*(\tilde A_1)].$ that defined in \eqref{eq:kkt}.
\end{assumption}

The compactness condition in Assumption 3 should be interpreted as a deterministic regularity condition on the realized sequence of data-dependent SDPs. As shown later in Lemmas~\ref{lem:dos-evolution} and~\ref{lem:srg-evolution}, under the stochastic setting considered therein, stable closed-loop trajectories and Gaussian noises imply that the data-dependent matrices remain bounded in probability, or equivalently, belong to a compact set on any finite horizon with high probability. Also, it has been shown that primal and dual nondegeneracy, as well as strict complementarity, are generic properties for SDPs \cite{alizadeh1997complementarity}. 

Under Assumption \ref{ass:regularity}, the Jacobian matrix of the KKT system is nonsingular at the KKT point \eqref{eq:kkt}. This implies that the KKT solution map is locally single-valued and Lipschitz continuous with respect to $A_{1,t}$ \cite[Theorem 18]{chan2008constraint}, \cite[Theorem 3.1]{alizadeh1998primal}. The result is formalized in the following lemma.

\begin{lemma}[Local Smoothness]\label{lem:smoothness}
Consider the problem \eqref{eq:dee-sdp2}, and let Assumptions \ref{ass:linear}, \ref{ass:pe}, and \ref{ass:regularity} hold. Then, there exists a constant $L > 0$ such that for any $t_1,t_2\ge t_0$:
\[
\|\omega^\star(A_{1,t_1}) - \omega^\star(A_{1,t_2})\|_F
\le
L \|A_{1,t_1} - A_{1,t_2}\|_F.
\]
\end{lemma}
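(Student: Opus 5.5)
The argument has two stages: first show the KKT solution map is locally Lipschitz at every point of $\mathcal{A}$, then pass from local to global on $\mathcal{A}$ using compactness.

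\textbf{Local step.} Write the KKT system of \eqref{eq:dee-sdp2} as a nonsmooth equation $G(\omega;\tilde A_1)=0$, where $\omega=(z,s,\lambda)$. The components are
\[
G(\omega;\tilde A_1)=
\begin{bmatrix}
c+\tilde A_1^\top\lambda\\
s-\Pi_{\mathbb{S}_+}(s-\lambda)\\
\tilde A_1 z+A_2 s-b
\end{bmatrix},
\]
where the middle block is understood blockwise in the matrix variables $S_i$ and $\Lambda_i$. Note that $G$ is affine, hence smooth, in $\tilde A_1$, and the constraint data enter bilinearly. Fix $\bar A\in\mathcal{A}$. By Assumption~\ref{ass:regularity}, strict complementarity makes $\Pi_{\mathbb{S}_+}$ continuously differentiable near $\bar s-\bar\lambda$, because no eigenvalue of the argument is zero. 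Primal–dual nondegeneracy, together with the full column rank from Assumption~\ref{ass:pe}, then makes the Jacobian $\partial_\omega G$ nonsingular at the KKT point, as in \cite[Theorem 3.1]{alizadeh1998primal}. The classical implicit function theorem now gives a neighborhood $\mathcal{U}_{\bar A}$ of $\bar A$ on which the solution map $\omega^*(\cdot)$ is single-valued and $C^1$. It is therefore Lipschitz on $\mathcal{U}_{\bar A}$ with some constant $L_{\bar A}$, and it coincides with the KKT point in \eqref{eq:kkt}, since that point is unique under nondegeneracy. Alternatively, one could simply cite \cite[Theorem 18]{chan2008constraint}. By Assumption~\ref{ass:linear}, $A_{1,t}$ depends linearly on the data, so the parameterization introduces no extra nonlinearity.

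\textbf{Globalization via compactness.} Cover $\mathcal{A}$ by the open sets $\mathcal{U}_{\bar A}$, where each is shrunk to a ball of radius $r_{\bar A}$ on which the local Lipschitz bound holds. Extract a finite subcover, and let $\delta>0$ be a Lebesgue number of this cover. Take two points $A_{1,t_1},A_{1,t_2}$. If $\|A_{1,t_1}-A_{1,t_2}\|_F<\delta$, both lie in a single ball, and the bound holds with $L_1:=\max_j L_{\bar A_j}$. Otherwise use boundedness: $\omega^*$ is continuous on the compact set $\mathcal{A}$, so $M:=\sup_{\mathcal{A}}\|\omega^*\|_F<\infty$. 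Then
\[
\|\omega^*(A_{1,t_1})-\omega^*(A_{1,t_2})\|_F\le 2M\le (2M/\delta)\,\|A_{1,t_1}-A_{1,t_2}\|_F.
\]
Setting $L:=\max\{L_1,2M/\delta\}$ gives the claim for all $t_1,t_2\ge t_0$.

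\textbf{Main obstacle.} The delicate part is the local step, namely verifying that strict complementarity and nondegeneracy at every point of $\mathcal{A}$ give a nonsingular KKT Jacobian in the lifted $(z,s,\lambda)$ form used here. This requires checking that the reformulation with $A_2=-I$ preserves the Alizadeh–Haeberly–Overton nondegeneracy conditions and that the injectivity of $A_{1,t}$ removes any null directions in $z$. I would handle it by reducing to the standard primal–dual SDP pair in $(z,\Lambda)$ and invoking \cite{alizadeh1998primal} directly.

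The globalization step also relies on continuity of $\omega^*$ over all of $\mathcal{A}$, which the local step provides pointwise. This needs single-valuedness of the KKT point at each $\tilde A_1\in\mathcal{A}$, which again follows from nondegeneracy. If one prefers to avoid Lebesgue-number arguments, the segment between the two points may leave $\mathcal{A}$, so the argument should not depend on convexity of $\mathcal{A}$. The finite-cover plus boundedness argument above deliberately avoids that assumption.
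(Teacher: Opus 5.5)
Your proposal is correct and follows the same route as the paper. The paper justifies Lemma~\ref{lem:smoothness} only by noting that Assumption~\ref{ass:regularity} makes the KKT Jacobian nonsingular, so the solution map is locally single-valued and Lipschitz by \cite[Theorem 18]{chan2008constraint} and \cite[Theorem 3.1]{alizadeh1998primal}. Your finite-subcover and Lebesgue-number argument on the compact set $\mathcal{A}$ makes explicit the step the paper leaves implicit, namely turning these local constants into one uniform $L$ valid for all $t_1,t_2\ge t_0$. One small sign slip in $G$: your middle block $s-\Pi_{\mathbb{S}_+}(s-\lambda)$ forces $\lambda\succeq 0$, as in the paper, so stationarity should read $c-\tilde A_1^\top\lambda=0$.
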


We introduce two quantities to characterize the online data-driven SDP \eqref{eq:dee-sdp2}. The first measures how much the SDP changes when a new data sample $(x_t,u_t)$ arrives, thereby quantifying the temporal drift of the constraints. The second measures how strongly the informative data component dominates the disturbance-induced perturbation, thereby quantifying the reliability of the data-dependent formulation. These notions motivate the definitions of Difference-of-Signal (DOS) and Sim-to-Real Gap (SRG).

\begin{definition}[Difference-of-Signal (DOS)]
    For problem \eqref{eq:dee-sdp2}, the Difference-of-Signal (DOS) at time $t$ is defined as
    \begin{equation}\label{eq:dos}
        \mathrm{DOS}_t
    :=
    \bigl\| A_{1,t}-A_{1,t-1}\bigr\|_F
    =
    \bigl\| A_1(D_{t})-A_1(D_{t-1})\bigr\|_F.
    \end{equation}
\end{definition}

DOS is analogous to the variation measures used in online optimization and learning \cite{dixit2019online,hall2015online,jadbabaie2015online}, where dynamic regret bounds are often expressed in terms of \emph{temporal variation}, \emph{path length}, or \emph{variation budgets} of the loss functions, comparators, or target sequences. Unlike these optimizer-level quantities, DOS measures the one-step drift of the data-induced affine constraint map $A_{1,t}$, which in turn indirectly measures the drift of the optimizer according to Lemma \ref{lem:smoothness}.

\begin{definition}[Sim-to-Real Gap (SRG)]
    For problem \eqref{eq:dee-sdp2}, the Sim-to-Real Gap (SRG) at time $t$ is defined as
    \begin{equation}
        \mathrm{SRG}_t:=\|A_{1,t}-A_{1,t}^{\mathrm{nl}}\|_F.
    \end{equation}
\end{definition}

SRG quantifies the mismatch between the ideal data-dependent problem and its noisy finite-sample realization, similar to noise qualification in data-driven control. For indirect formulations in Section \ref{sec:indirect}, where one first identifies a model and then solves a model-based control problem, SRG reduces to the usual identification error induced by data noise. Hence, for the considered OLS identification \eqref{eq:ols} it is governed by the inverse of the classical signal-to-noise ratio \cite{dorfler2023certainty}, where $\mathrm{SNR}_t:=\frac{\underline{\sigma}({\Phi_t})}{\underline{\sigma}(\overline{W}_{0,t})}$.



{To characterize the asymptotic behavior of $\mathrm{DOS}_t$ and $\mathrm{SRG}_t$, we specialize the analysis to the following stochastic setting. Assume that $\{w_t\}_{t\ge0}$ and $\{e_t\}_{t\ge0}$ are mutually independent Gaussian white-noise sequences with
$w_t\sim\mathcal N(0,I_n),
e_t\sim\mathcal N(0,I_m).$
Throughout this subsection, $O_p(\cdot)$ denotes the usual order in probability with respect to the probability law induced by these stochastic processes.}

\begin{lemma}[Evolution of DOS]
\label{lem:dos-evolution}
Suppose that the closed-loop trajectory is stable in the sense that of finite variance
\begin{equation}\label{eq:second-moment}
    \sup_{t\ge0}\mathbb E\|x_t\|^2<\infty.
\end{equation}
Under Assumptions~\ref{ass:linear} and \ref{ass:pe}, there exist constants $L_{\mathrm{DOS}}^{\mathrm d}>0$ and $L_{\mathrm{DOS}}^{\mathrm i}>0$ such that the DOS of the direct data-driven formulation \eqref{eq:opt-sdp} satisfies
\begin{equation}\label{eq:direct-dos}
    \mathrm{DOS}^{\mathrm d}_{t}
    \le
    L_{\mathrm{DOS}}^{\mathrm d}
    \left\|
    \begin{bmatrix}
        \overline X_{1,t}-\overline X_{1,t-1} \\
        \overline X_{0,t}-\overline X_{0,t-1} \\
        \overline U_{0,t}-\overline U_{0,t-1}
    \end{bmatrix}
    \right\|_F
    =
    \mathcal{O}_p\!\left(\frac{1}{t}\right),
\end{equation}
whereas the DOS of the indirect formulation satisfies
\begin{equation}\label{eq:indirect-dos}
    \mathrm{DOS}^{\mathrm i}_{t}
    \le
    L_{\mathrm{DOS}}^{\mathrm i}
    \left\|
        \overline X_{1,t}\Phi_{t}^{-1}
        -
        \overline X_{1,t-1}\Phi_{t-1}^{-1}
    \right\|_F
    =
    \mathcal{O}_p\!\left(\frac{1}{t}\right).
\end{equation}
\end{lemma}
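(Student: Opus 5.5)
The proof has two parts: a deterministic Lipschitz bound coming from Assumption~\ref{ass:linear}, followed by a probabilistic rate for the one-step increments of the sample covariances. For the direct case, Assumption~\ref{ass:linear} writes $A_{1,t}=\mathcal L^d(\overline X_{0,t},\overline X_{1,t},\overline U_{0,t})$ with $\mathcal L^d$ bounded and linear. Hence
\[
\mathrm{DOS}^{\mathrm d}_t=\|\mathcal L^d(\overline X_{0,t}-\overline X_{0,t-1},\,\overline X_{1,t}-\overline X_{1,t-1},\,\overline U_{0,t}-\overline U_{0,t-1})\|_F ,
\]
and this is at most the operator norm of $\mathcal L^d$ (up to the equivalence between the stacked norm and the norm on the product space) times the stacked increment. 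That gives the first inequality with $L^{\mathrm d}_{\mathrm{DOS}}:=\|\mathcal L^d\|$. The indirect case works the same way with $\mathcal L^i$. Here we use the closed form $[\hat A_t\ \hat B_t]=\overline X_{1,t}\Phi_t^{-1}$ (up to the column ordering $[U;X]$ of $\Phi_t$), which holds because $\overline X_{1,t}=X_{1,t}D_{0,t}^\top/t$ and $\Phi_t$ is invertible under persistency of excitation (Assumption~\ref{ass:pe}).

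For the rate, I use the recursive structure of sample averages. Write $d_t:=[u_{t-1};x_{t-1}]$. Then
\[
\overline X_{1,t}-\overline X_{1,t-1}=\tfrac1t\bigl(x_t d_t^\top-\overline X_{1,t-1}\bigr),
\]
and the same identity holds for $\overline X_{0,t}$, $\overline U_{0,t}$ and $\Phi_t$. So each increment equals $1/t$ times a new rank-one term minus the previous average. It therefore suffices to show that both $x_td_t^\top$ and the running averages are $O_p(1)$. For the new term, the second-moment bound \eqref{eq:second-moment} gives $\sup_t\mathbb E\|x_t\|^2<\infty$. For the inputs, $u_t=K_tx_t+e_t$ with $e_t$ Gaussian; using that the gains remain in a bounded set (e.g.\ via Assumption~\ref{ass:regularity} or stability), we get $\sup_t\mathbb E\|u_t\|^2<\infty$. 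Cauchy--Schwarz then gives $\sup_t\mathbb E\|x_td_t^\top\|_F<\infty$, and Markov's inequality yields $O_p(1)$. For the averages, $\mathbb E\|\overline X_{1,t-1}\|_F\le \frac1{t-1}\sum_s\mathbb E\|x_sd_s^\top\|_F$ is uniformly bounded, so they are $O_p(1)$ as well. Together these give \eqref{eq:direct-dos}.

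For the indirect case I split
\[
\overline X_{1,t}\Phi_t^{-1}-\overline X_{1,t-1}\Phi_{t-1}^{-1}=(\overline X_{1,t}-\overline X_{1,t-1})\Phi_t^{-1}+\overline X_{1,t-1}\Phi_t^{-1}(\Phi_{t-1}-\Phi_t)\Phi_{t-1}^{-1}.
\]
Each factor is $O_p(1)$ or $O_p(1/t)$, provided that $\|\Phi_t^{-1}\|_2=1/\underline\sigma(\Phi_t)$ is $O_p(1)$. This is the main obstacle: Assumption~\ref{ass:pe} bounds $\underline\sigma(A_{1,t})$, not $\underline\sigma(\Phi_t)$ directly. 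I would first try to argue that persistency of excitation, as interpreted after Assumption~\ref{ass:pe}, gives a uniform lower bound $\underline\sigma(\Phi_t)\ge\underline\gamma'>0$; in the LQR example, full column rank of the stacked covariances is tied to $\Phi_t$ being nonsingular. If that interpretation is not accepted, the fallback is the excitation result of \cite{faradonbeh2020input}, which gives $\liminf_t\underline\sigma(\Phi_t)>0$ almost surely because $e_t\sim\mathcal N(0,I_m)$ is independent. That bound implies $\|\Phi_t^{-1}\|_2=O_p(1)$. Combining the products of $O_p$ terms then gives \eqref{eq:indirect-dos}.
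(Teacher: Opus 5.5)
Your proposal is correct and follows the paper's proof: the operator-norm bound from Assumption~\ref{ass:linear}, the rank-one recursion $\overline X_{1,t}-\overline X_{1,t-1}=\frac{1}{t}(x_td_t^\top-\overline X_{1,t-1})$ and its analogues, and Cauchy--Schwarz plus Markov under \eqref{eq:second-moment}. Your resolvent-identity treatment of the indirect case, which states the requirements $\|\Phi_t^{-1}\|_2=\mathcal{O}_p(1)$ and $\sup_t\mathbb{E}\|u_t\|^2<\infty$ explicitly, supplies what the paper calls ``similar'' or simply asserts (but note that Assumption~\ref{ass:regularity} bounds $A_{1,t}$, not $K_t$, so bounded gains is an extra hypothesis, in your proof as in the paper's).
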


\begin{lemma}[Evolution of SRG]
\label{lem:srg-evolution}
Suppose that the closed-loop trajectory is stable in the sense of \eqref{eq:second-moment}. Under Assumptions~\ref{ass:linear} and \ref{ass:pe}, there exist constants $L_{\mathrm{SRG}}^{\mathrm d}>0$ and $L_{\mathrm{SRG}}^{\mathrm i}>0$ such that the SRG of the direct data-driven formulation \eqref{eq:opt-sdp} satisfies
\begin{equation}\label{eq:direct-srg}
    \mathrm{SRG}^{\mathrm d}_t
    \le
    L_{\mathrm{SRG}}^{\mathrm d}
    \left\|
        \overline W_{0,t}
    \right\|_F
    =
    \mathcal{O}_p\!\left(\frac{1}{\sqrt t}\right),
\end{equation}
whereas the SRG of the indirect formulation satisfies
\begin{equation}\label{eq:indirect-srg}
    \mathrm{SRG}^{\mathrm i}_t
    \le
    L_{\mathrm{SRG}}^{\mathrm i}
    \left\|
        \overline W_{0,t}\Phi_t^{-1}
    \right\|_F
    =
    \mathcal{O}_p\!\left(\frac{1}{\sqrt t}\right).
\end{equation}
\end{lemma}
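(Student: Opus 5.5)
The plan has two parts. First, bound $\mathrm{SRG}_t$ by a noise-dependent quantity using the linearity in Assumption~\ref{ass:linear}. Second, show that this quantity is $\mathcal O_p(1/\sqrt t)$ through a martingale argument.

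\emph{Direct case.} The noisy and noiseless data differ only in the $X_1$ block, namely $X_{1,t}$ versus $X_{1,t}-W_{0,t}$. The corresponding sample covariance therefore differs by exactly $\overline W_{0,t}$, while $\overline X_{0,t}$ and $\overline U_{0,t}$ are unchanged. By Assumption~\ref{ass:linear},
\[
A_{1,t}-A_{1,t}^{\mathrm{nl}}=\mathcal L^d(0,\overline W_{0,t},0),
\]
so \eqref{eq:direct-srg} holds with $L^{\mathrm d}_{\mathrm{SRG}}=\|\mathcal L^d\|$.

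\emph{Indirect case.} I use $[\hat B_t\ \hat A_t]=\overline X_{1,t}\Phi_t^{-1}$, where I order the blocks $(U,X)$ as in $\Phi_t$ and note that $\Phi_t$ is invertible by Assumption~\ref{ass:pe}. I also use the data relation $\overline X_{1,t}=[B\ A]\Phi_t+\overline W_{0,t}$. The noiseless estimate equals $[B\ A]$, so the estimation error is exactly $\overline W_{0,t}\Phi_t^{-1}$. Applying $\mathcal L^i$ gives \eqref{eq:indirect-srg} with $L^{\mathrm i}_{\mathrm{SRG}}=\|\mathcal L^i\|$.

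For the rate, write
\[
\overline W_{0,t}=\frac1t\sum_{k=0}^{t-1}w_k\,[u_k^\top\ x_k^\top].
\]
Since $u_k=K_kx_k+e_k$, and $K_k$ is determined by data up to time $k$, the pair $(u_k,x_k)$ is $\mathcal F_{k}$-measurable, where $\mathcal F_k$ is generated by $w_0,\dots,w_{k-1}$ and $e_0,\dots,e_k$. The noise $w_k$ is independent of $\mathcal F_k$ with zero mean, so the summands form a martingale difference sequence. Orthogonality of the increments then gives
\[
\mathbb E\|\overline W_{0,t}\|_F^2=\frac{1}{t^2}\sum_k \mathbb E\|w_k\|^2\big(\|u_k\|^2+\|x_k\|^2\big)=\frac{n}{t^2}\sum_k\mathbb E\big(\|u_k\|^2+\|x_k\|^2\big).
\]
This is $\mathcal O(1/t)$ provided the second moments are uniformly bounded. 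For $x_k$ this is \eqref{eq:second-moment}. For $u_k$ I need $\sup_k\|K_k\|$ bounded. I would obtain this from the compactness of $\mathcal A$ in Assumption~\ref{ass:regularity}, which keeps the iterates in a bounded region, or else simply take it as part of the stability hypothesis. Chebyshev's inequality then yields $\|\overline W_{0,t}\|_F=\mathcal O_p(1/\sqrt t)$, which finishes the direct case.

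For the indirect case, I bound
\[
\|\overline W_{0,t}\Phi_t^{-1}\|_F\le \|\overline W_{0,t}\|_F\,\underline\sigma(\Phi_t)^{-1}.
\]
The step I expect to be the main obstacle is showing $\underline\sigma(\Phi_t)^{-1}=\mathcal O_p(1)$. Assumption~\ref{ass:pe} only gives a lower bound on $\underline\sigma(A_{1,t})$, and this does not directly give a uniform lower bound on $\underline\sigma(\Phi_t)$. I would first interpret persistency of excitation as providing such a bound, citing \cite{faradonbeh2020input}. Failing that, I would argue directly. We have $\Phi_t\succeq \frac1t\sum_k \mathbb E[\,\cdot\mid\mathcal F_{k-1}]$ plus a martingale remainder. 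The conditional covariance of $(u_k,x_k)$ is bounded below by $\mathrm{blkdiag}(I_m,\cdot)$ in the right coordinates, because of the injected noise $e_k$ and the noise $w_{k-1}$ entering $x_k$. The remainder is $o_p(1)$ by the same second-moment argument. Together these give $\underline\sigma(\Phi_t)\ge c>0$ with high probability, and hence the $\mathcal O_p(1/\sqrt t)$ rate for the indirect SRG.
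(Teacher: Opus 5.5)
Your proposal is correct and follows essentially the paper's route: linearity of $\mathcal L^d$ and $\mathcal L^i$ reduces $\mathrm{SRG}_t$ to $\|\overline W_{0,t}\|_F$ and $\|\overline W_{0,t}\Phi_t^{-1}\|_F$, the cross terms in $\mathbb E\|\overline W_{0,t}\|_F^2$ vanish because $w_k$ is independent of the current regressor $(u_k,x_k)$, Chebyshev/Markov turns $\mathbb E\|\overline W_{0,t}\|_F^2=\mathcal O(1/t)$ into the rate, and the indirect case rests on $\|\Phi_t^{-1}\|=\mathcal O_p(1)$, which the paper simply asserts. Your martingale framing and your flagging of the implicit uniform bounds on $\mathbb E\|u_k\|^2$ and $\underline\sigma(\Phi_t)^{-1}$ are more careful than the paper; two caveats remain. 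First, compactness of $\mathcal A$ (not even a hypothesis of this lemma) does not bound $K_k$, so your fallback of absorbing $\sup_k\|K_k\|<\infty$ into the stability hypothesis is the right one. Second, your sketched $o_p(1)$ bound on the martingale remainder of $\Phi_t$ would need uniformly bounded fourth moments, not just second moments.
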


The bounded second-moment condition in \eqref{eq:second-moment} is standard in online data-driven policy learning, such as \cite{zhao2025data}. The notation $\mathcal{O}_p(1/\sqrt t)$ should be understood in the usual probabilistic sense. For example, $\mathrm{SRG}_t=\mathcal{O}_p(1/\sqrt t)$ means that $\sqrt t\,\mathrm{SRG}_t$ is bounded in probability; equivalently, for any $\delta\in(0,1)$, there exist constants $M_\delta>0$ and $t_\delta>0$ such that $\mathbb P(\mathrm{SRG}_t\le M_\delta/\sqrt t)\ge 1-\delta$ for all $t\ge t_\delta$. Therefore, Lemma~\ref{lem:srg-evolution} does not imply that the SRG decreases monotonically along every sample path. Rather, it shows that the noise-induced mismatch between the noisy and noiseless data-dependent SDPs vanishes at the standard Monte Carlo rate in probability.

From these lemmas, one can see that the main difference between direct and indirect approaches is that the indirect formulation always introduces the regressor $\Phi_t^{-1}$. Hence, under diminishing input excitation, the direct quantities may decrease with the covariance perturbations, whereas the indirect ones can be amplified by the growth or ill-conditioning of $\Phi_t^{-1}$.

We are now ready to analyze the convergence of Algorithm \ref{al:online-control-first-order} for \eqref{eq:dee-sdp2}, where we will show that both DOS and SRG influence the result.

\subsection{Local Convergence}\label{sec:local}

We begin with a local convergence analysis under the conditions that: i) $\omega_{t_0}$ is obtained by warm-starting with offline data $D_{t_0}$; and ii) $A_{1,t}$ varies sufficiently slowly, namely, $\mathrm{DOS}_t$ is uniformly bounded. {Define
\begin{equation}
    f^*:=c^\top z^*(A_{1,t}^{\mathrm{nl}})
\end{equation}
be the optimal cost of \eqref{eq:opt-sdp} with noiseless data. Under the assumption of persistently exciting data, the noiseless data uniquely determine the ground-truth model $(A,B)$. Consequently, $f^*$ coincides with the optimal cost of the ground-truth model-based problem and is therefore time-invariant. We note that, in the direct case, $z^*(A_{1,t}^{\mathrm{nl}})$ and $s^*(A_{1,t}^{\mathrm{nl}})$ depend on $t$, since the data matrices enter the SDP directly and thus influence the optimal solution. In contrast, in the indirect case, any noiseless data satisfying Assumption~\ref{ass:pe} yield the same SDP. As a result, $z^*(A_{1,t}^{\mathrm{nl}})$ and $s^*(A_{1,t}^{\mathrm{nl}})$ are independent of $t$.}


The convergence result is provided in the following theorem.
\begin{theorem}\label{th:opt-gap-linear}
Consider \eqref{eq:dee-sdp2}, and let Assumptions \ref{ass:linear}--\ref{ass:regularity} hold. Then, for any $T\ge 0$, there exist constants $\varepsilon^\omega_{T}>0$, $\varepsilon^{\mathrm{DOS}}_{T}>0$, $\sigma_T\in(0,1)$ and $0<C_1,C_2,C_3<\infty$, such that, if
\[
\omega_{t_0}\in \mathcal{B}(\omega_{t_0}^*,\varepsilon^\omega_T)\quad  \mathrm{and} \quad
\mathrm{DOS}_t \le \varepsilon_T^{\mathrm{DOS}}
\]
for all $t\in[t_0,\ldots,t_0+T-1]$, then it holds that
\begin{equation}\label{eq:opt-gap-linear}
\begin{aligned}
   &\left|f_1(z_{t+1})-f^*\right|\\
\le{}&C_1\sigma_{T}^{t+1-t_0}
+
C_2\sum_{k=t_0}^{t+1}\sigma_{T}^{t-k}\mathrm{DOS}_k
+
C_3\mathrm{SRG}_{t+1}.
\end{aligned}
\end{equation}
\end{theorem}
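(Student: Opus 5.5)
The proof proceeds in three steps: a one-step local contraction of the ADMM map for a frozen constraint matrix, an unrolled tracking recursion that uses Lemma~\ref{lem:smoothness} to account for the drift of $\omega^*_t$, and a final conversion from iterate error to cost error in which the SRG appears.

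\textbf{Step 1 (local contraction for a fixed map).} For a fixed $\tilde A_1\in\mathcal A$, write one pass of Steps 3--5 of Algorithm~\ref{al:online-control-first-order} as a fixed-point map $\omega\mapsto\mathcal T_{\tilde A_1}(\omega)$ whose fixed point is $\omega^*(\tilde A_1)$. Under strict complementarity and primal--dual nondegeneracy (Assumption~\ref{ass:regularity}), the projection $\Pi_{\mathbb S_+}$ is differentiable near $\omega^*(\tilde A_1)$. This is because no eigenvalue of $F_i-\Lambda_i/\rho$ sits at zero there.

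The Jacobian of $\mathcal T_{\tilde A_1}$ at the fixed point is then a linear ADMM-type iteration operator. Full column rank $\underline\sigma(\tilde A_1)\ge\underline\gamma$ (Assumption~\ref{ass:pe}) makes the $z$-step well posed and Lipschitz in $\tilde A_1$. Nondegeneracy rules out unit-modulus eigenvalues of this linearization, by the standard local linear convergence of ADMM for nondegenerate SDPs. Its spectral radius is therefore some $\sigma(\tilde A_1)<1$.

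By continuity in $\tilde A_1$ and compactness of $\mathcal A$, we obtain a uniform $\sigma_T\in(0,1)$, a uniform radius $r>0$ and a weighted norm, equivalent to $\|\cdot\|_F$ with constant $\kappa$, such that
\[
\|\mathcal T_{\tilde A_1}(\omega)-\omega^*(\tilde A_1)\|\le\sigma_T\|\omega-\omega^*(\tilde A_1)\|\quad\text{for all }\omega\in\mathcal B(\omega^*(\tilde A_1),r),\ \tilde A_1\in\mathcal A .
\]
This step is the main obstacle. The linearization analysis has to be done uniformly over $\mathcal A$, and the contracting norm depends on $\tilde A_1$. I would handle the second issue by absorbing the change of norm into $\kappa$, working in $\|\cdot\|_F$ with a factor $\kappa\sigma_T^k$. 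If a one-step contraction fails, I would use a $k$-step contraction and restrict to horizon $T$; this is why the constants are indexed by $T$.

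\textbf{Step 2 (tracking recursion).} Let $e_t:=\|\omega_t-\omega^*_t\|$. Since $\omega_{t}=\mathcal T_{A_{1,t}}(\omega_{t-1})$, the triangle inequality together with Lemma~\ref{lem:smoothness} gives
\[
e_t\le\sigma_T\bigl(e_{t-1}+\|\omega^*_{t-1}-\omega^*_t\|\bigr)\le\sigma_T e_{t-1}+\sigma_T L\,\mathrm{DOS}_t ,
\]
provided $\omega_{t-1}\in\mathcal B(\omega^*_t,r)$. Induction keeps the iterates inside this ball. Choose $\varepsilon^\omega_T$ and $\varepsilon^{\mathrm{DOS}}_T$ so that
\[
\varepsilon^\omega_T+L\varepsilon^{\mathrm{DOS}}_T/(1-\sigma_T)+L\varepsilon^{\mathrm{DOS}}_T\le r ,
\]
which is possible for all $t$ in the window $[t_0,t_0+T-1]$. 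Unrolling the recursion yields
\[
e_{t+1}\le\sigma_T^{t+1-t_0}e_{t_0}+L\sum_{k}\sigma_T^{t-k}\mathrm{DOS}_k .
\]

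\textbf{Step 3 (cost gap).} Split
\[
|f_1(z_{t+1})-f^*|\le|c^\top(z_{t+1}-z^*_{t+1})|+|c^\top z^*(A_{1,t+1})-c^\top z^*(A^{\mathrm{nl}}_{1,t+1})| .
\]
The first term is at most $\|c\|e_{t+1}$. For the second, both $A_{1,t+1}$ and $A^{\mathrm{nl}}_{1,t+1}$ lie in $\mathcal A$, so Lemma~\ref{lem:smoothness}, whose constant is uniform on $\mathcal A$, gives a bound of $\|c\|L\,\mathrm{SRG}_{t+1}$. This uses the fact that $c^\top z^*(A^{\mathrm{nl}}_{1,t+1})=f^*$ for all $t$. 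Setting $C_1=\|c\|\varepsilon^\omega_T$, $C_2=\|c\|L$ and $C_3=\|c\|L$ (times $\kappa$ where needed) gives \eqref{eq:opt-gap-linear}.
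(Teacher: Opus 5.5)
Your architecture matches the paper's, and Steps~2 and~3 are fine as written. The gap is that Step~1 does not produce what Step~2 uses.

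\textbf{The gap.} The recursion $e_t\le\sigma_T e_{t-1}+\sigma_T L\,\mathrm{DOS}_t$ needs a one-step contraction toward $\omega^*(A_{1,t})$ in a single norm that is the same for every $t$. Spectral radius below one for each linearization, plus compactness of $\mathcal A$, gives only weaker things. For each fixed $\tilde A_1$ you get a contraction in an adapted norm that depends on $\tilde A_1$, or a bound $\kappa\sigma^k$ for $k$ iterations of that one map. A family of maps each with spectral radius below one need not share a contracting norm.

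Neither of your repairs closes this. Converting each step back to $\|\cdot\|_F$ costs a factor $\kappa\ge 1$ per step, so the recursion becomes $e_t\le\kappa\sigma_T e_{t-1}+\dots$, which need not contract. A $k$-step bound for one fixed map says nothing directly about the composition $\mathcal T_{A_{1,t+k}}\circ\cdots\circ\mathcal T_{A_{1,t+1}}$ of different maps. That would need a further perturbation argument, which you do not give.

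\textbf{How the paper avoids it.} It tracks $H_t=\mathrm{vec}^{-1}(s_t-\lambda_t/\rho)$ rather than $\omega_t$. In this variable ADMM is, up to scaling, a Douglas--Rachford iteration. Because $A_2=-I$ is fixed, this iteration is firmly nonexpansive in the plain Frobenius norm for every $A_{1,t}$.
\begin{itemize}
\item The paper invokes \cite[Theorem 3]{kang2025local} for $\|H_{t+1}-H_t^*\|_F\le\sigma_t\|H_t-H_t^*\|_F$ on $\mathcal B(H_t^*,\varepsilon^H_t)$, in that fixed norm.
\item It takes $\sigma_T=\max_t\sigma_t$ and $\varepsilon^H_T=\min_t\varepsilon^H_t$ over the finite window. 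This, not a $k$-step argument, is why the constants carry the index $T$.
\item It sets $\varepsilon^{\mathrm{DOS}}_T=(1-\sigma_T)\varepsilon^H_T/((1+1/\rho)L)$.
\item It recovers $s_t,\lambda_t$ from $H_t$ via the Moreau decomposition and nonexpansiveness of $\Pi_{\mathbb S_+}$.
\item It bounds $z_{t+1}-z^*_{t+1}$ through the normal equations of the $z$-update using $\underline\gamma$. So no contraction in $z$ is ever needed, and this is where Assumption~\ref{ass:pe} enters quantitatively.
\end{itemize}

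\textbf{How to keep your compactness route.} Linearize in $H$ instead of $\omega$. The Jacobian $J$ at $H^*(\tilde A_1)$ is then firmly nonexpansive in $\|\cdot\|_F$, so $\|Jh\|_F^2\le\|h\|_F^2-\|h-Jh\|_F^2$. Nonsingularity of the KKT Jacobian under Assumption~\ref{ass:regularity} rules out nonzero fixed points of $J$. Together with compactness of the unit sphere, this gives $\|J\|_2<1$ in that same norm. Continuity over the compact $\mathcal A$ then yields a uniform one-step rate and radius, and your Steps~2--3 go through. The resulting constants would even be independent of $T$, slightly stronger than the paper's window-wise min/max.
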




Theorem \ref{th:opt-gap-linear} deterministically characterizes the optimization tracking error in terms of $\mathrm{DOS}_t$ and $\mathrm{SRG}_t$. Under the additional stochastic assumptions introduced before Lemmas \ref{lem:dos-evolution} and \ref{lem:srg-evolution}, these quantities admit explicit asymptotic rates. Looking at the result \eqref{eq:opt-gap-linear}, it shows that, under suitable regularity conditions, the optimality gap is bounded by three terms. The first term $C_1\sigma_T^{t+1-t_0}$ decays linearly and reflects the initialization error at $\omega_{t_0}$. The second term $C_2\sum_{k=t_0}^{t+1}\sigma_T^{t-k}\mathrm{DOS}_k$ captures the accumulated effect of problem variation. For constant DOS, this term remains bounded, while for vanishing $\mathrm{DOS}_t$, it converges to zero. The third term $C_3\mathrm{SRG}_{t+1}$ quantifies the mismatch between the data-driven problem and the ground-truth model-based problem. {Unlike the DOS term, which enters the recursive tracking dynamics and therefore accumulates over time, the SRG term is introduced only in the final comparison between the current current noisy SDP \eqref{eq:opt-sdp} and its noiseless counterpart. Hence, it contributes only through the instantaneous quantity $\mathrm{SRG}_{t+1}$.} 

Under the additional stochastic assumptions over $w_t$ and $e_t$, Lemma \ref{lem:srg-evolution} further implies that $C_3\mathrm{SRG}_{t}=O_p(1/\sqrt{t})$, which matches the order appearing in sample-based policy optimization and model-free LQR analyses \cite{fazel2018global,malik2020derivative,ju2025model}. In contrast to these LQR-focused results, our framework applies to a broader class of control problems that admit SDP formulations.

In practice, $\omega_{t_0}$ can be initialized within a neighborhood of $\omega_{t_0}^*$ by solving \eqref{eq:opt-sdp} using the offline data set $\mathcal D_{t_0}$. Moreover, Lemma~\ref{lem:dos-evolution} shows that, for stable closed-loop trajectories, DOS decreases over time. Therefore, with sufficient offline data, the subsequent online variation of the data-induced SDP can be made small. We note that the closed-loop stability is not discussed here, but the result follows from standard standard sequential stability analysis similar to \cite{zhao2025policy} when $\omega_{t_0}$ varies slowly and the SDP constraints are satisfied.  


\subsection{Global Convergence}

We next consider the case where the initialization $\omega_{t_0}$ is arbitrary and $\mathrm{DOS}_t$ is not assumed to be sufficiently small. This corresponds to the situation where the offline data are insufficient and the initialization is arbitrary. This case relies on significantly weaker assumptions than the local convergence analysis. Nevertheless, we can still certify the regret.


%


\begin{theorem}\label{th:opt-gap}
   Consider problem \eqref{eq:dee-sdp2}, and let Assumptions \ref{ass:linear}--\ref{ass:regularity} hold. Then, there exist constants $B_1,B_2,B_3>0$ such that
    \begin{equation}\label{eq:opt-gap}
        \frac{1}{T-t_0+1}
        \left|
        \sum_{t=t_0}^T
        \left(
        f_1(z_t)-f^*
        \right)
        \right|
        \le S_T,
    \end{equation}
    where
    \begin{equation}
        S_T
        =
        \frac{
        B_1+B_2\sum_{t=t_0}^T\sum_{k=t_0}^t \mathrm{DOS}_k
        }{T-t_0+1}
        +
        B_3\mathrm{SRG}_T.
    \end{equation}
\end{theorem}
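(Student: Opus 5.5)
We follow the standard ergodic analysis of ADMM for two-block convex problems, adapted to a time-varying constraint matrix. Throughout, $\omega_t=(z_t,s_t,\lambda_t)$ denotes the iterates of Algorithm~\ref{al:online-control-first-order} and $\omega_t^*$ the KKT point \eqref{eq:kkt} of \eqref{eq:dee-sdp2} with matrix $A_{1,t}$. The bound splits into two parts,
\[
\sum_{t}(f_1(z_t)-f^*)=\sum_t\bigl(f_1(z_t)-f_1(z_t^*)\bigr)+\sum_t\bigl(f_1(z_t^*)-f^*\bigr).
\]
The second sum is handled by Lemma~\ref{lem:smoothness} together with compactness (Assumption~\ref{ass:regularity}): $|f_1(z_t^*)-f_1(z^*(A^{\mathrm{nl}}_{1,t}))|\le \|c\|L\,\mathrm{SRG}_t$. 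This produces the term $B_3\mathrm{SRG}$. I expect to need $\mathrm{SRG}_t$ replaced by $\mathrm{SRG}_T$ (or a uniform bound on it), which can be absorbed into $B_3$ using the stated monotone/bounded behaviour. At minimum, the averaged SRG term is bounded by $\max_t\mathrm{SRG}_t$, and the proof will state this explicitly.

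For the first sum, we use the one-step ADMM inequality for a fixed problem. Write the optimality conditions of Steps 3--5 at time $t$: Step 3 gives $c-A_{1,t}^\top(\lambda_{t-1}+\rho(b-A_{1,t}z_t+s_{t-1}))\ni 0$ (in scaled form), and Step 4 gives the projection variational inequality. Combined with convexity of $f_1,f_2$, they yield, for any feasible $\omega=(z,s,\lambda)$ of the time-$t$ problem,
\[
f_1(z_t)+f_2(s_t)-f_1(z)-f_2(s)+\langle \lambda, A_{1,t}z_t-s_t-b\rangle \le \tfrac12\bigl(\|\omega-\omega_{t-1}\|_H^2-\|\omega-\omega_t\|_H^2\bigr),
\]
where $H=\mathrm{diag}(\rho I,\rho^{-1} I)$ on $(s,\lambda)$. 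Choose $\omega=\omega_t^*$ with $\lambda=\lambda_t^*$. Since $f_2(s_t)=0$, $f_2(s_t^*)=0$, and the left side is the Lagrangian gap, which is nonnegative, this gives an upper bound on $f_1(z_t)-f_1(z_t^*)+\langle\lambda_t^*,r_t\rangle$. Here $r_t$ is the residual, and bounded iterates keep the residual term under control.

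The main obstacle is telescoping across time, because the comparator moves: $\|\omega_t^*-\omega_{t-1}\|_H^2-\|\omega_t^*-\omega_t\|_H^2$ must be related to $\|\omega_{t-1}^*-\omega_{t-1}\|_H^2$. I would write
\[
\|\omega_t^*-\omega_{t-1}\|_H^2\le \|\omega_{t-1}^*-\omega_{t-1}\|_H^2+2\|\omega_t^*-\omega_{t-1}^*\|_H\,\|\omega_{t-1}^*-\omega_{t-1}\|_H+\|\omega_t^*-\omega_{t-1}^*\|_H^2.
\]
The change in comparator is bounded by $L\,\mathrm{DOS}_t$ via Lemma~\ref{lem:smoothness}. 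To bound $\|\omega_{t-1}^*-\omega_{t-1}\|$, use the nonexpansiveness (Fejér monotonicity) of ADMM toward the fixed-problem KKT point. By induction this gives $\|\omega_t-\omega_t^*\|_H\le \|\omega_{t_0}-\omega_{t_0}^*\|_H+L'\sum_{k\le t}\mathrm{DOS}_k$. One must also handle the fact that Step 3 at time $t$ uses $A_{1,t}$ while $s_{t-1},\lambda_{t-1}$ came from $A_{1,t-1}$; this mismatch costs another $O(\mathrm{DOS}_t)$, which is bounded using compactness of $\mathcal A$ and bounded iterates. With Assumption~\ref{ass:regularity} also giving uniform bounds on $\|\omega_t^*\|$, the cross term per step is $O(\mathrm{DOS}_t)+O(\sum_{k\le t}\mathrm{DOS}_k)\cdot\mathrm{DOS}_t$, and the second part is controlled by a constant times $\sum_{k\le t}\mathrm{DOS}_k$ under bounded DOS.

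Summing from $t_0$ to $T$, the telescoped part gives $B_1=\tfrac12\|\omega_{t_0}-\omega_{t_0}^*\|_H^2$ plus constants, and the drift terms give $B_2\sum_t\sum_{k\le t}\mathrm{DOS}_k$. A lower bound of the same form follows from the saddle-point inequality $f_1(z_t)-f_1(z_t^*)\ge -\langle\lambda_t^*,r_t\rangle$, with the residual $\sum_t r_t$ controlled by the telescoped dual update $\lambda_T-\lambda_{t_0}=\rho\sum r_t$ plus DOS corrections; this yields the absolute value. Dividing by $T-t_0+1$ and adding the SRG part gives \eqref{eq:opt-gap}.
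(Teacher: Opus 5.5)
\textbf{The gap is in the SRG term.} Against the moving comparator $z_t^*$, your decomposition produces $\frac{\|c\|_F L}{T-t_0+1}\sum_{t=t_0}^{T}\mathrm{SRG}_t$. The paper gives no ``monotone behaviour'' that would turn this into $B_3\,\mathrm{SRG}_T$. Theorem~\ref{th:opt-gap} is a deterministic statement about the realized sequence, while Lemma~\ref{lem:srg-evolution} gives only an $\mathcal{O}_p$ rate, and the paper stresses that $\mathrm{SRG}_t$ need not decrease along a sample path. Boundedness does not help either, since $\mathrm{SRG}_T$ can be far smaller than earlier values. As written, you prove a bound with the time-averaged (or maximal) SRG, not the stated one.

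The paper reaches $\mathrm{SRG}_T$ through a different architecture. It uses one fixed comparator, the KKT point $(z_T^*,s_T^*)$ of the time-$T$ problem, and a free multiplier $\lambda$ paired with the noiseless matrix $A^{\mathrm{nl}}_{1,T}$. It works with the ergodic averages $(\overline z_T,\overline s_T)$ and chooses $\lambda=-\eta\,r/\|r\|_F$, where $r$ is their noiseless time-$T$ residual. The lower bound comes from weak duality with $\hat\lambda_T^*$. SRG then enters only through $|c^\top(\hat z_T^*-z_T^*)|\le\|c\|_F L\,\mathrm{SRG}_T$ and through the mismatches $\|A^{\mathrm{nl}}_{1,T}-A_{1,t}\|_F$.

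\textbf{You should not try to force $\mathrm{SRG}_T$ into your argument.} The paper's fixed comparator is infeasible for $A_{1,t}$, and the resulting mismatch terms are bounded there by $\mathrm{SRG}_T+\beta_2$. The $\beta_2$ part is $O(1)$ per step and survives division by $T-t_0+1$, so the displayed $S_T$ is not actually obtained. If you instead use $\|A_{1,T}-A_{1,t}\|_F\le\sum_{k=t+1}^{T}\mathrm{DOS}_k$ (which you could also apply directly to $|f_1(z_t^*)-f^*|$), you get $\mathrm{SRG}_T$ only together with a forward-weighted term such as $\sum_k(k-t_0)\mathrm{DOS}_k$. That is not the theorem's $\sum_t\sum_{k\le t}\mathrm{DOS}_k=\sum_k(T-k+1)\mathrm{DOS}_k$.

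In fact, no argument gives the displayed form with $T$-independent constants. Take $\tilde A,\bar A\in\mathcal A$ with $c^\top z^*(\tilde A)\neq c^\top z^*(\bar A)$, and set $A^{\mathrm{nl}}_{1,t}\equiv\bar A$, $A_{1,t}=\tilde A$ for $t<T$, $A_{1,T}=\bar A$. Initialize at $\omega^*(\tilde A)$. The iterates sit at this fixed point until the last step, so the left-hand side equals $|c^\top z^*(\tilde A)-f^*|+O(1/(T-t_0+1))$. Meanwhile $\mathrm{SRG}_T=0$ and the double sum reduces to $\mathrm{DOS}_T$, so the right-hand side tends to zero.

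Your hedged conclusion is therefore the correct one, and the rest of your argument is sound. It also avoids the paper's non-vanishing terms:
\begin{itemize}
\item The one-step ADMM inequality holds from arbitrary $(s_{t-1},\lambda_{t-1})$, so the mismatch with $A_{1,t-1}$ costs nothing.
\item Your Fej\'er-plus-drift estimate is exactly Lemma~\ref{lem:boundedness}.
\item Boundedness of DOS follows from compactness of $\mathcal A$.
\item The term $\sum_t\langle\lambda_t^*,r_t\rangle$, which you need in both directions, is handled by summation by parts with $r_t=\pm(\lambda_{t-1}-\lambda_t)/\rho$. This leaves boundary terms and $\sum_t L\,\mathrm{DOS}_{t+1}\|\lambda_t\|_F$, all of the form $B_1+B_2\sum_t\sum_{k\le t}\mathrm{DOS}_k$.
\end{itemize}
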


The comparison between Theorems~\ref{th:opt-gap} and~\ref{th:opt-gap-linear} highlights two different convergence regimes. Theorem~\ref{th:opt-gap-linear} bounds the pointwise optimality gap, whereas Theorem~\ref{th:opt-gap} bounds regret. This is because, away from the KKT point, the primal-dual sequence may oscillate, making the pointwise optimality gap less regular. In contrast, regret corresponds to an ergodic objective measure in our setting, since the cost function is linear, and therefore provides a smoother convergence metric.

Another distinction is that Theorem~\ref{th:opt-gap-linear} establishes a local linear convergence result, with a term that scales linearly with $\mathrm{SRG}$ and another term that depends on the weighted cumulative variation of $\mathrm{DOS}_k$. In contrast, Theorem~\ref{th:opt-gap} gives a global sublinear result, again with a term that scales linearly with $\mathrm{SRG}$, but now involving a double accumulation of $\mathrm{DOS}_k$. This difference arises because the global contractivity property of the primal-dual sequence is weaker than the local contraction around the KKT point. The resulting sublinear rate is consistent with the classical convergence behavior of primal-dual methods \cite{he20121}.

\subsection{Constraint Tightening for Closed-Loop Guarantees}\label{sec:constraint-tightening}

Another problem of interest is constraint satisfaction with respect to the ground-truth model, namely, the positive definiteness of $F_i(D_t^{\mathrm{nl}};z_{t+1}), \quad
D_t^{\mathrm{nl}}:=
[\overline{X}_{0,t}\quad \overline{U}_{0,t}\quad \overline{X}_{1,t}-\overline{W}_{0,t}].$
Under the transformed form \eqref{eq:dee-sdp2}, this is equivalent to investigating the positive definiteness of $\mathrm{vec}^{-1}(A_{1,t}^{\mathrm{nl}}z_{t+1}-b)
=
\mathrm{diag}\left(
F_1(D_t^{\mathrm{nl}};z_{t+1}),\ldots,F_l(D_t^{\mathrm{nl}};z_{t+1})
\right).$ {Similar to policy optimization-based methods \cite{zhao2025policy}, we consider only the case in Section \ref{sec:local}, where $\omega_{t_0}$ is obtained by warm-starting and $\mathrm{DOS}_t$ is sufficiently small. The following proposition quantifies this constraint satisfaction property.

\begin{proposition}\label{prop:constraint}
    Consider problem \eqref{eq:dee-sdp2}, and let Assumptions \ref{ass:linear}--\ref{ass:regularity} hold. Then, for any $T\ge 0$, there exist constants $\varepsilon^\omega_{T}>0$, $\varepsilon^{\mathrm{DOS}}_{T}>0$, $\sigma_T\in (0,1)$ and $0<C_5,C_6,C_7<\infty$ such that, if
\[
\omega_{t_0}\in \mathcal{B}(\omega_{t_0}^*,\varepsilon^\omega_T)
\quad\text{and}\quad
\mathrm{DOS}_t\le \varepsilon^{\mathrm{DOS}}_{T},\quad \forall t\in[t_0,\ldots,t_0+T],
\]
then, for all $t\in[t_0,\ldots,t_0+T-1]$,
\begin{equation}\label{eq:50}
    \mathrm{vec}^{-1}(A_{1,t}^{\mathrm{nl}}z_{t+1}-b)+\bar{\epsilon}_t I \succeq 0,
\end{equation}
where
\begin{equation*}
    \bar{\epsilon}_t
    :=
    C_5 \sigma_T^{\,t-t_0}
    +
    C_6 \sum_{k=t_0}^{t+1}\sigma_T^{\,t-k}\mathrm{DOS}_k
    +
    C_7 \mathrm{SRG}_{t+1}
\end{equation*}
\end{proposition}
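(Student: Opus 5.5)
The plan is to reduce constraint satisfaction to a distance bound between the iterate and the noiseless KKT point, and to reuse the local tracking estimate behind Theorem~\ref{th:opt-gap-linear}. At the noiseless optimum $z^*(A_{1,t+1}^{\mathrm{nl}})$, primal feasibility gives $\mathrm{vec}^{-1}(A_{1,t+1}^{\mathrm{nl}}z^*(A_{1,t+1}^{\mathrm{nl}})-b)=\mathrm{vec}^{-1}(s^*(A_{1,t+1}^{\mathrm{nl}}))\succeq 0$, since $A_2=-I$. We then write
\[
A_{1,t}^{\mathrm{nl}}z_{t+1}-b = \bigl(A_{1,t+1}^{\mathrm{nl}}z^*(A^{\mathrm{nl}}_{1,t+1})-b\bigr) + A_{1,t+1}^{\mathrm{nl}}\bigl(z_{t+1}-z^*(A^{\mathrm{nl}}_{1,t+1})\bigr) + \bigl(A_{1,t}^{\mathrm{nl}}-A_{1,t+1}^{\mathrm{nl}}\bigr)z_{t+1}.
\]
For a symmetric matrix $M$ with $\|M\|_F\le\epsilon$ we have $M\succeq-\epsilon I$. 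It therefore suffices to bound the Frobenius norm of the last two terms by $\bar\epsilon_t$.

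The first step is the local contraction used in Theorem~\ref{th:opt-gap-linear}. Under the warm start and the DOS smallness condition on the horizon, strong regularity (Assumption~\ref{ass:regularity}) together with the full column rank bound $\underline{\gamma}$ (Assumption~\ref{ass:pe}) makes the one-step ADMM operator for fixed $A_{1,t}$ a local contraction toward $\omega_t^*$ with a factor $\sigma_T$. Combining this with Lemma~\ref{lem:smoothness}, which gives $\|\omega^*_{t}-\omega^*_{t-1}\|_F\le L\,\mathrm{DOS}_t$, and unrolling the recursion yields
\[
\|\omega_{t+1}-\omega^*_{t+1}\|_F\le c_1\sigma_T^{t+1-t_0}+c_2\sum_{k=t_0}^{t+1}\sigma_T^{t-k}\mathrm{DOS}_k .
\]
An induction on $t$ keeps the iterates inside the contraction neighborhood for all $t\le t_0+T$; this is what fixes $\varepsilon^\omega_T$ and $\varepsilon^{\mathrm{DOS}}_T$. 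Applying Lemma~\ref{lem:smoothness} between $A_{1,t+1}$ and $A^{\mathrm{nl}}_{1,t+1}$, both of which lie in $\mathcal A$, then adds $L\,\mathrm{SRG}_{t+1}$ to the bound on $\|z_{t+1}-z^*(A^{\mathrm{nl}}_{1,t+1})\|_F$.

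The second term is bounded by $\sup_{\mathcal A}\|\tilde A_1\|_2$ times this distance, and the supremum is finite by compactness of $\mathcal A$. For the third term we note that $\|A^{\mathrm{nl}}_{1,t}-A^{\mathrm{nl}}_{1,t+1}\|_F\le \mathrm{DOS}_{t+1}+\mathrm{SRG}_t+\mathrm{SRG}_{t+1}$. The factor $\|z_{t+1}\|$ is uniformly bounded, because the iterates stay in a fixed ball around KKT points of a compact family, and those KKT points are bounded by continuity of $\omega^*$ on $\mathcal A$. The $\mathrm{DOS}_{t+1}$ contribution is absorbed into the sum, since its weight $\sigma_T^{-1}$ is at least $1$. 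For the $\mathrm{SRG}_t$ contribution, we would either bound it by $\mathrm{SRG}_{t+1}$ plus DOS-type increments, or avoid it altogether by comparing directly with $z^*(A^{\mathrm{nl}}_{1,t})$ and using $\|A_{1,t}-A_{1,t}^{\mathrm{nl}}\|$ together with one extra DOS. The cleanest choice is to take the base point at index $t$ and shift the index by one inside the geometric sum. This is why the statement involves $\sigma_T^{t-t_0}$ rather than $\sigma_T^{t+1-t_0}$, and why the DOS condition is assumed up to $t_0+T$.

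The main obstacle is establishing the local linear contraction of the primal–dual iteration under time-varying $A_{1,t}$, with the iterates kept in the region where it holds. The natural first attempt is to linearize the ADMM fixed-point map at $\omega^*$. The linearization can be taken because strict complementarity makes the PSD projection locally smooth with constant rank. Nondegeneracy then gives spectral radius below one on the relevant subspace, and uniformity over the compact set $\mathcal A$ gives a single $\sigma_T$. Since the same estimate underlies Theorem~\ref{th:opt-gap-linear}, it can be reused here; the proposition only adds the Lipschitz comparison with the noiseless data and the eigenvalue-perturbation step.
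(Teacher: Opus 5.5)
Your proof is correct in substance, but it rests on a different decomposition from the paper's.

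\textbf{How the two routes differ.} The paper anchors positive semidefiniteness at the algorithm's own slack. It writes $A_{1,t}^{\mathrm{nl}}z_{t+1}-b=s_{t+1}+r_{t+1}$, where $r_{t+1}=(A_{1,t}^{\mathrm{nl}}-A_{1,t})z_{t+1}+(A_{1,t}z_{t+1}+A_2s_{t+1}-b)$, and uses $s_{t+1}\succeq 0$ from the projection in Step~4. The second bracket is the dual increment $(\lambda_{t+1}-\lambda_t)/\rho$ (up to sign) from Step~5. That increment is bounded through the Moreau split and the $H$-contraction at two consecutive steps. The term $\|H_t-H_t^*\|_F$ appearing there is what produces $\sigma_T^{t-t_0}$; it is not a deliberate index shift. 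Noise enters only as $\mathrm{SRG}_t\|z_{t+1}\|_F$.

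You instead anchor at the noiseless optimal slack. You control the residual by the primal tracking error plus a second application of Lemma~\ref{lem:smoothness} between noisy and noiseless data.

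\textbf{What each buys.} The paper's route is intrinsic to the iteration. It never touches the noiseless KKT point, and no Lipschitz constant multiplies the SRG term. Your route needs regularity at $A_{1,t}^{\mathrm{nl}}$ (granted by Assumption~\ref{ass:regularity}) and carries a factor $L\sup_{\mathcal A}\|\tilde A_1\|_2$. In exchange, it avoids the dual bookkeeping. It also gives the stronger intermediate fact that $z_{t+1}$ is close to the noiseless optimizer.

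\textbf{The SRG index.} Your first option, bounding $\mathrm{SRG}_t$ by $\mathrm{SRG}_{t+1}$ plus DOS increments, fails in the direct case. One only has $\mathrm{SRG}_t\le\mathrm{SRG}_{t+1}+\mathrm{DOS}_{t+1}+\|A_{1,t+1}^{\mathrm{nl}}-A_{1,t}^{\mathrm{nl}}\|_F$. The last term involves the increment of $\overline W_{0,t}$, which DOS does not control. (That term vanishes in the indirect case, where $A_{1,t}^{\mathrm{nl}}$ is time-invariant.)

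Your second option, anchoring at $z^*(A_{1,t}^{\mathrm{nl}})$, yields $\mathrm{SRG}_t$. This is exactly what the paper's own proof produces, since its bound contains $\|A_{1,t}-A_{1,t}^{\mathrm{nl}}\|_F=\mathrm{SRG}_t$. So the $\mathrm{SRG}_{t+1}$ in the statement should be read as $\mathrm{SRG}_t$; in the direct case neither argument delivers it as written.

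\textbf{The contraction sketch.} Linearizing the ADMM map would give contraction only in an adapted norm, not in the Frobenius norm. A contraction factor uniform over $\mathcal A$ would also need more than pointwise regularity. The paper sidesteps both issues by citing a Frobenius-norm contraction in $H$ and taking worst-case constants over the finite horizon, which is why the constants depend on $T$. Since you ultimately reuse the estimate from Theorem~\ref{th:opt-gap-linear}, this does not affect your proof.
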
}

Proposition \ref{prop:constraint} shows that the SDP constraints corresponding to the noiseless data are violated by at most $\bar{\epsilon}_t I$ at the $t$-th iterate. Moreover, this term vanishes if both $\mathrm{SRG}_t$ and $\mathrm{DOS}_t$ decrease to zero. In practice, to ensure anytime constraint satisfaction, one can tighten the constraints of the SDP problem \eqref{eq:opt-sdp} so that the tightening compensates for the worst-case violation $\bar{\epsilon}_t I$. 

Following this result, we provide a practical way to ensure constraint satisfaction via a tightening mechanism. 

\textbf{Constraint tightening}. Instead of solving problem \eqref{eq:dee-sdp2}, which corresponds to \eqref{eq:opt-sdp}, we use Algorithm \ref{al:online-control-first-order} to solve the following tightened problem:
\begin{equation}\label{eq:dee-sdp3}
    \min_{z,s}~f_1(z)+f_2(s)
    \quad
    \mathrm{subject~to}~
    A_{1,t} z+A_{2}s=b+\mathrm{vec}(\epsilon I_q).
\end{equation}
We note that the additive term $\mathrm{vec}(\epsilon I_q)$ does not change Assumptions \ref{ass:linear}, \ref{ass:pe}, and \ref{ass:regularity}. The following proposition show that the constraints for the original problem \eqref{eq:opt-sdp} are all satisfied all the time.

\begin{corollary}\label{coro:constraint-tightening}
    Consider problem \eqref{eq:dee-sdp3}, and let Assumptions \ref{ass:linear}-\ref{ass:regularity} holds. Then, for any $T\ge 0$, there exist constants $\varepsilon^\omega_{T}>0$, $\varepsilon^{\mathrm{DOS}}_{T}>0$, $\sigma_T\in (0,1)$ and $0<C_5,C_6,C_7<\infty$ such that, if\[
\omega_{t_0}\in \mathcal{B}(\omega_{t_0}^*,\varepsilon^\omega_T)
\quad\text{and}\quad
\mathrm{DOS}_t\le \varepsilon^{\mathrm{DOS}}_{T},\quad \forall t\in[t_0,\ldots,t_0+T],
\]
and $\epsilon>\bar \epsilon_t,\forall t\in[t_0,\ldots,t_0+T]$, where $\bar\epsilon_t$ is defined following \eqref{eq:50}.
Then, for all $t\in[t_0,\ldots,t_0+T-1]$,
\begin{equation}
    \mathrm{vec}^{-1}(A_{1,t}^{\mathrm{nl}}z_{t+1}-b)\succeq 0.
\end{equation}
\end{corollary}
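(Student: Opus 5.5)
The plan is to reduce Corollary~\ref{coro:constraint-tightening} to Proposition~\ref{prop:constraint} applied to the tightened problem \eqref{eq:dee-sdp3}. Rewrite \eqref{eq:dee-sdp3} in the standard form \eqref{eq:dee-sdp2} with constant vector $\tilde b:=b+\mathrm{vec}(\epsilon I_q)$. The data-induced maps $A_{1,t}$ and $A_{1,t}^{\mathrm{nl}}$ are unchanged, so DOS and SRG are unchanged. As the paper notes after \eqref{eq:dee-sdp3}, Assumptions~\ref{ass:linear}--\ref{ass:regularity} remain valid; we take this as given, applied to the shifted KKT points. The KKT points $\omega_t^*$ now refer to the tightened problem, and the ball condition on $\omega_{t_0}$ is understood relative to them.

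Step 1 applies Proposition~\ref{prop:constraint} verbatim, with $b$ replaced by $\tilde b$. It provides constants $\varepsilon^\omega_T,\varepsilon^{\mathrm{DOS}}_T,\sigma_T,C_5,C_6,C_7$ such that, under the stated initialization and DOS hypotheses, for all $t\in[t_0,\ldots,t_0+T-1]$,
\[
\mathrm{vec}^{-1}(A_{1,t}^{\mathrm{nl}}z_{t+1}-\tilde b)+\bar\epsilon_t I\succeq 0 .
\]
Here $\bar\epsilon_t$ has exactly the form given after \eqref{eq:50}, and these constants define $\bar\epsilon_t$ in the corollary.

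Step 2 uses the linearity of $\mathrm{vec}^{-1}$:
\[
\mathrm{vec}^{-1}(A_{1,t}^{\mathrm{nl}}z_{t+1}-\tilde b)=\mathrm{vec}^{-1}(A_{1,t}^{\mathrm{nl}}z_{t+1}-b)-\epsilon I .
\]
Hence
\[
\mathrm{vec}^{-1}(A_{1,t}^{\mathrm{nl}}z_{t+1}-b)\succeq(\epsilon-\bar\epsilon_t)I\succeq 0,
\]
because $\epsilon>\bar\epsilon_t$ on the horizon. This is the claim.

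The main obstacle is a possible circularity in choosing $\epsilon$. The constants $C_5,C_6,C_7$, and hence $\bar\epsilon_t$, come from the Lipschitz and contraction constants of the tightened problem, and those could depend on $\epsilon$. I would handle this by fixing a range $\epsilon\in[0,\bar\epsilon_{\max}]$ in advance. I would then argue, by compactness of $\mathcal A\times[0,\bar\epsilon_{\max}]$ together with the regularity in Assumption~\ref{ass:regularity}, that the constants in Lemma~\ref{lem:smoothness} and Proposition~\ref{prop:constraint} can be chosen uniformly over this range. After that, any $\epsilon$ in the range exceeding the resulting $\bar\epsilon_t$ works, which is how the hypothesis $\epsilon>\bar\epsilon_t$ should be read.
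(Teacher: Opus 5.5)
Your argument is correct and follows the route the paper intends. The paper gives no separate proof: it only remarks that the shift $b\mapsto b+\mathrm{vec}(\epsilon I_q)$ leaves Assumptions~\ref{ass:linear}--\ref{ass:regularity} intact, so Proposition~\ref{prop:constraint} applies to \eqref{eq:dee-sdp3}, and the identity $\mathrm{vec}^{-1}(A_{1,t}^{\mathrm{nl}}z_{t+1}-b)=\mathrm{vec}^{-1}(A_{1,t}^{\mathrm{nl}}z_{t+1}-\tilde b)+\epsilon I$ then gives $\succeq(\epsilon-\bar\epsilon_t)I\succeq 0$, with the signs exactly as you have them.

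Your circularity paragraph is not needed for the statement as written. The corollary is conditional on $\epsilon>\bar\epsilon_t$, with the constants obtained for the fixed tightened problem. If you keep it as a remark on satisfiability, be aware that uniformity over $\epsilon\in[0,\bar\epsilon_{\max}]$ needs more than compactness. It would also require regularity at every such $\epsilon$ and uniform local contraction radii and rates, which neither the paper nor your sketch supplies.
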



    
    








\section{Simulations}
\label{sec:simulation}
In this section, we provide simulation results for LQR, $\mathcal{H}_\infty$, and safe exploration. All simulations are performed on a MacBook Air with an M3 chip, using MATLAB R2024b. The SDPs are solved by CVX with the default solver SDPT3 version 4.0. We first benchmark our method on the canonical LQR problem before moving to other objectives or state-constrained settings. 

\subsection{The LQR Problem}
\label{sec:lqr}

We take the benchmark system dynamics from \cite{dorfler2023certainty,zhao2025data}:
\begin{equation}\label{eq:baseline}
  A =
\begin{bmatrix}
1.01&0.01&0\\
0.01&1.01&0.01\\
0&0.01&1.01
\end{bmatrix},
\quad
B = I_3.
\end{equation}
Let $Q=I_3$ and $R=0.001\times I_3$ in the objective function. Similarly to \cite{zhao2025data}, we generate PE data $(X_{0,t_0},U_{0,t_0},W_{0,t_0})$ of length $t_0=20$ from the Gaussian distribution $\mathcal{N}(0,I_3)$, and $X_{1,t}$ is obtained from the system dynamics \eqref{eq:system}. The process noise is $w_t\sim\mathcal{N}(0,I_3)$, and the probing noise in the control input $u_t=K_tx_t+e_t$ is $e_t\sim \mathcal{N}(0,I_3)$.

Let $C(K)=\mathrm{Tr}[(Q+K^\top RK)\Sigma]$ be the LQR cost in \eqref{eq:po}, where $\Sigma$ is the solution of the Lyapunov equation $\Sigma = I_n+(A+BK)\Sigma (A+BK)^\top$, and let $C^*:=C(K^*)$ be the optimal cost that corresponds to the optimal LQR gain $K^*$ with respect to $(A,B)$. Define the regret as
\begin{equation*}
    \mathrm{Regret}_t:=\frac{1}{t}\sum_{k=t_0}^{t_0+t-1}(C(K_k)-C^*).
\end{equation*}
We compare our method with  state-of-the-art approach, DeePO \cite{zhao2025data}, which learns the optimal LQR gain directly from data using a policy optimization method \cite{fazel2018global}. For a fair comparison, both our method and DeePO \cite{zhao2025data} use the offline data $(X_{0,t_0},U_{0,t_0},W_{0,t_0})$ to solve the LQR problem and use the resulting solution to warm-start the online iterations. For DeePO to operate, we set $t_0=20$ so that the initial controller $K_{t_0}$ is stabilizing. In contrast, our method does not require an initially stabilizing controller, allowing the initial dataset to be substantially shorter. In our experiments, we found that our method requires only the minimum data length of $t_0=m+n=6$ to maintain closed-loop stability and learn the optimal gain, whereas DeePO requires $t_0=18$.

In Figure \ref{fig:lqr-reg}, the regrets of our method and DeePO are provided. It can be seen that both methods exhibit similar sublinear convergence rates. Our method shows relatively faster convergence, although the rate is of the same order of magnitude. Figure \ref{fig:lqr-opt} shows the optimality gap. During roughly the first 100 steps, our method shows faster convergence. After that, both methods show similar performance.

We have also calculated the computational time. In this case, both methods run for 2000 steps. DeePO takes $0.2919$s, while our method takes $0.6833$s. This is because our method needs to solve a linear equation and perform an SVD at every iteration, whereas DeePO only needs to solve a single Lyapunov equation and perform a simple subspace projection.

For fairness, we recall that, unlike DeePO tailored to LQR, our method applies to general SDP design formulations as shown in the next two sections.

\begin{figure}[!htbp]
    \centering
    \begin{subfigure}[t]{0.7\linewidth}
        \centering
        \includegraphics[width=\linewidth]{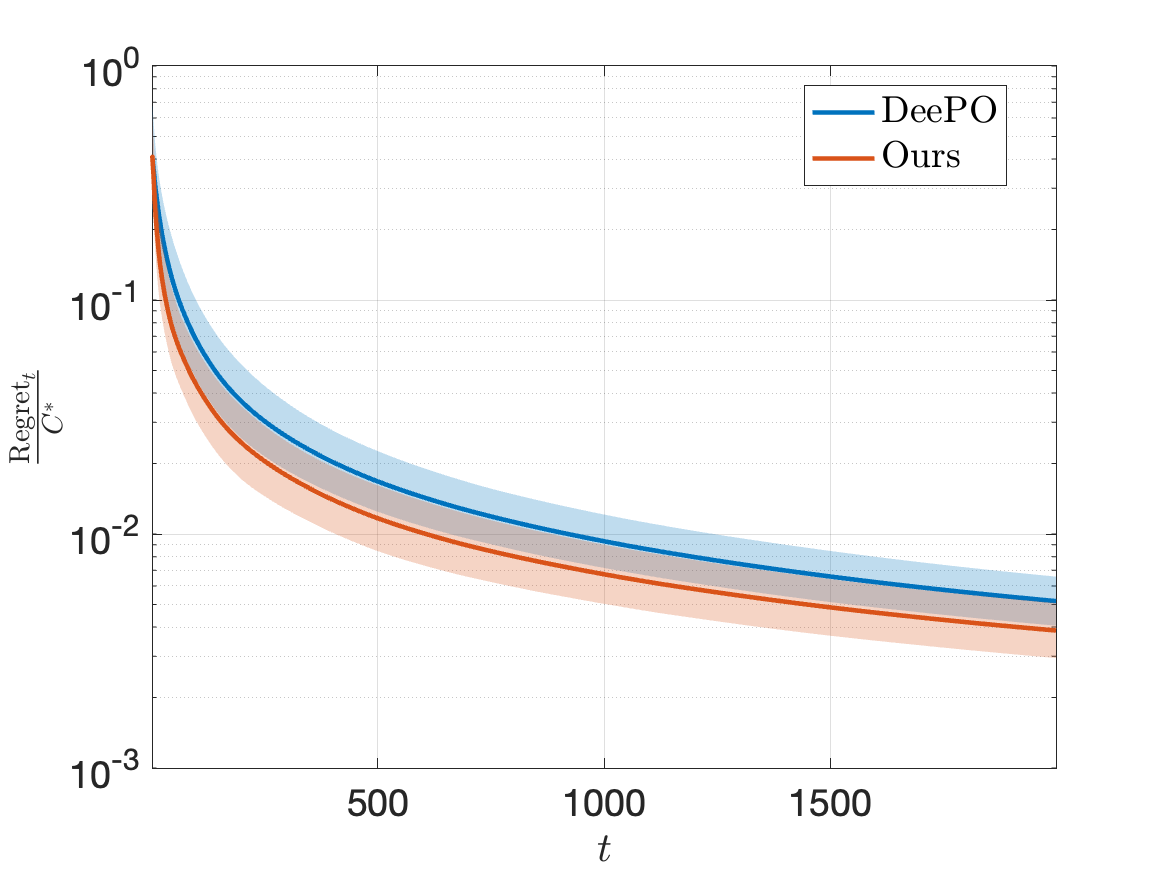}
        \caption{Regrets of our method and DeePO under the same setting.}
        \label{fig:lqr-reg}
    \end{subfigure}
    \hfill
    \begin{subfigure}[t]{0.7\linewidth}
        \centering
        \includegraphics[width=\linewidth]{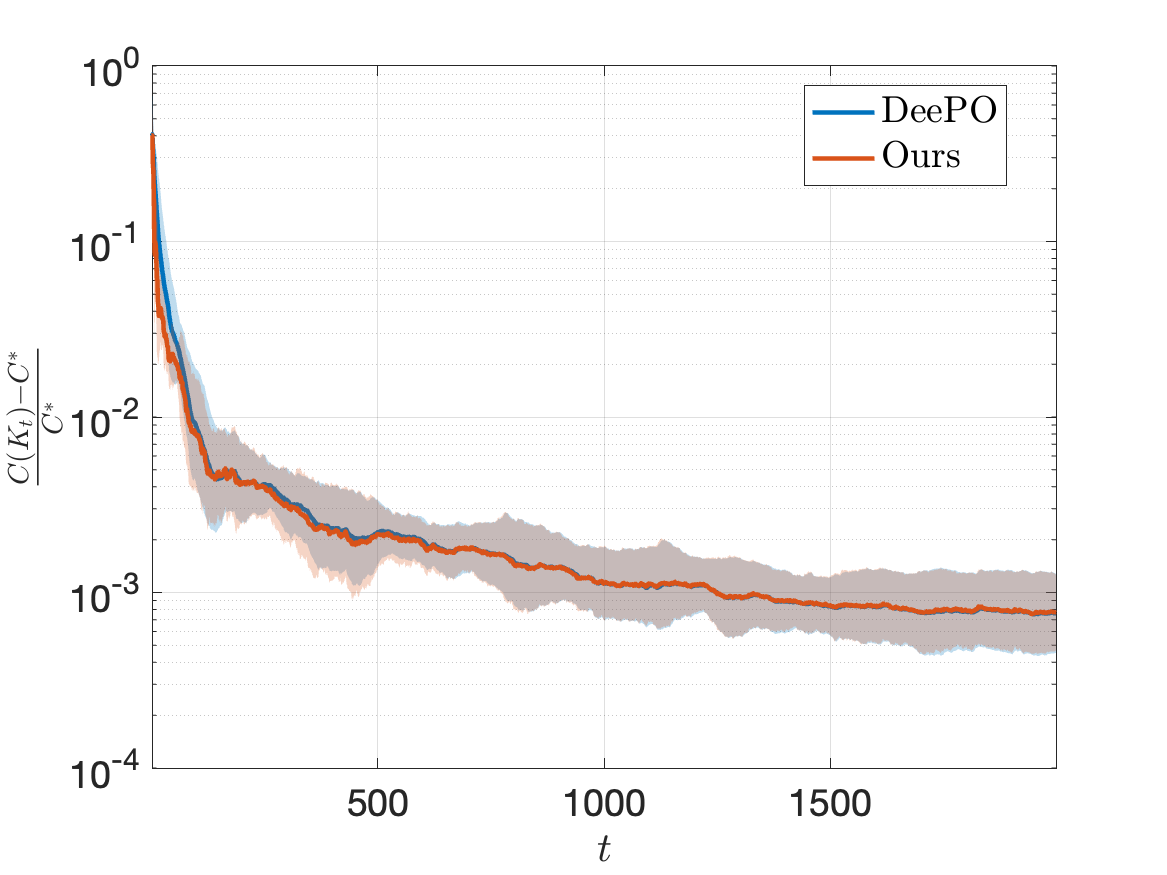}
        \caption{Optimality gaps of our method and DeePO under the same setting.}
        \label{fig:lqr-opt}
    \end{subfigure}
    \caption{Comparison between our method and DeePO for the LQR problem, where the shaded areas represent one standard deviation over 20 Monte Carlo trials.}
    \label{fig:lqr-comparison}
\end{figure}


\subsection{Online $\mathcal{H}_\infty$ Control}
\label{sec:hinf}

Another problem of broad interest and compatible with our setup is $\mathcal{H}_{\infty}$ robust control. Consider minimizing the $\ell_2$ gain from a disturbance input $w_t$ to the performance output $y_t = \sqrt{Q}x_t+\sqrt{R}u_t$. The associated model-based SDP formulation for the $\mathcal{H}_\infty$ problem is given by \cite{gahinet1994lmi}:
\begin{equation*}
    \begin{aligned}
        &\min_{X,Y,\gamma}~\gamma\\
        &\mathrm{subject~to}\\
        &
        \begin{bmatrix}
            X&(AX+BY)^\top&X^\top \sqrt{Q}&Y^\top \sqrt{R}&0\\
            AX+BY&X&0&0&I\\
            \sqrt{Q}X&0&\gamma I&0&0\\
            \sqrt{R}Y&0&0&\gamma I&0\\
            0&I&0&0&\gamma I
        \end{bmatrix}\succ 0,
    \end{aligned}
\end{equation*}
where the optimal controller is given by $K=YX^{-1}$.


Using similar parameterization approach as that for the LQR problem \eqref{eq:dee-lqr}, the direct data-driven formulation for $\mathcal{H}_\infty$ can be obtained as:
\begin{equation*}
    \begin{aligned}
        &\min_{P,\gamma}~\gamma\\
        &\mathrm{subject~to}\\
        &
        \begin{bmatrix}
        X_{0}P & (X_{1}P)^\top & (\sqrt Q\,X_{0}P)^\top & (\sqrt R\,U_{0}P)^\top & 0 \\
        X_{1}P & X_{0}P & 0 & 0 & I \\
        \sqrt Q\,X_{0}P & 0 & \gamma I & 0 & 0 \\
        \sqrt R\,U_{0}P & 0 & 0 & \gamma I & 0 \\
        0 & I & 0 & 0 & \gamma I
        \end{bmatrix}\succ 0.
    \end{aligned}
\end{equation*}
The control gain at iteration $t$ is given by $K_t=\overline{U}_{0,t}P_t(\overline{X}_{0,t}P_t)^{-1}$.

We test our algorithm on a challenging high-order system. In this case, we consider a controllable system with $n=14$ and $m=14$, where $B=I_{14}$ and $A$ is randomly generated. The length of the offline data is $t_0=2(14+14)=56$. The process noise is $w_t\sim \mathcal{N}(0,0.3I_{14})$, and the probing noise is $e_t\sim \mathcal{N}(0,0.5I_{14})$. To evaluate the performance of Algorithm \ref{al:online-control-first-order}, we conduct 20 Monte-Carlo experiments. In both figures, the $\mathcal{H}_{\infty}$ norm at time $t$ is computed as the one that corresponds to the closed-loop system $A+BK_t$. The regrets and optimality gaps are shown in Figures \ref{fig:hinf-reg} and \ref{fig:hinf-opt-gap}, respectively. It can be seen that our method improves the real $\mathcal{H}_{\infty}$ norm of the closed-loop system significantly through adaptation. Interestingly, the direct case show higher robustness than the indirect case as the variances of optimality gap and regret are slightly smaller. 

{To evaluate the computational efficiency of the proposed online updates, we compare the average runtime of one iteration of the proposed primal-dual method with repeatedly solving the corresponding SDP from scratch. Table~\ref{tab:runtime} reports the results under different dimensions. The proposed method consistently achieves a substantial reduction in computational time, with increasing speedups as the problem size grows. This demonstrates the scalability advantage of tracking the time-varying SDP through lightweight online updates instead of repeatedly solving full SDPs.

\begin{table}[t]
\centering
\caption{Average runtime per controller update for the $\mathcal{H}_\infty$ example.}
\label{tab:runtime}
\begin{tabular}{cccc}
\toprule
State dimension & Full SDP (s) & Ours (s)\\
\midrule
14 & 1.4349 & 0.0302 \\
20 & 7.4658 & 0.0860\\
40 & 184.5963 & 4.3422 \\
\bottomrule
\end{tabular}
\end{table}
}
\begin{figure}[!htbp]
    \centering
    \begin{subfigure}[t]{0.7\linewidth}
        \centering
        \includegraphics[width=\linewidth]{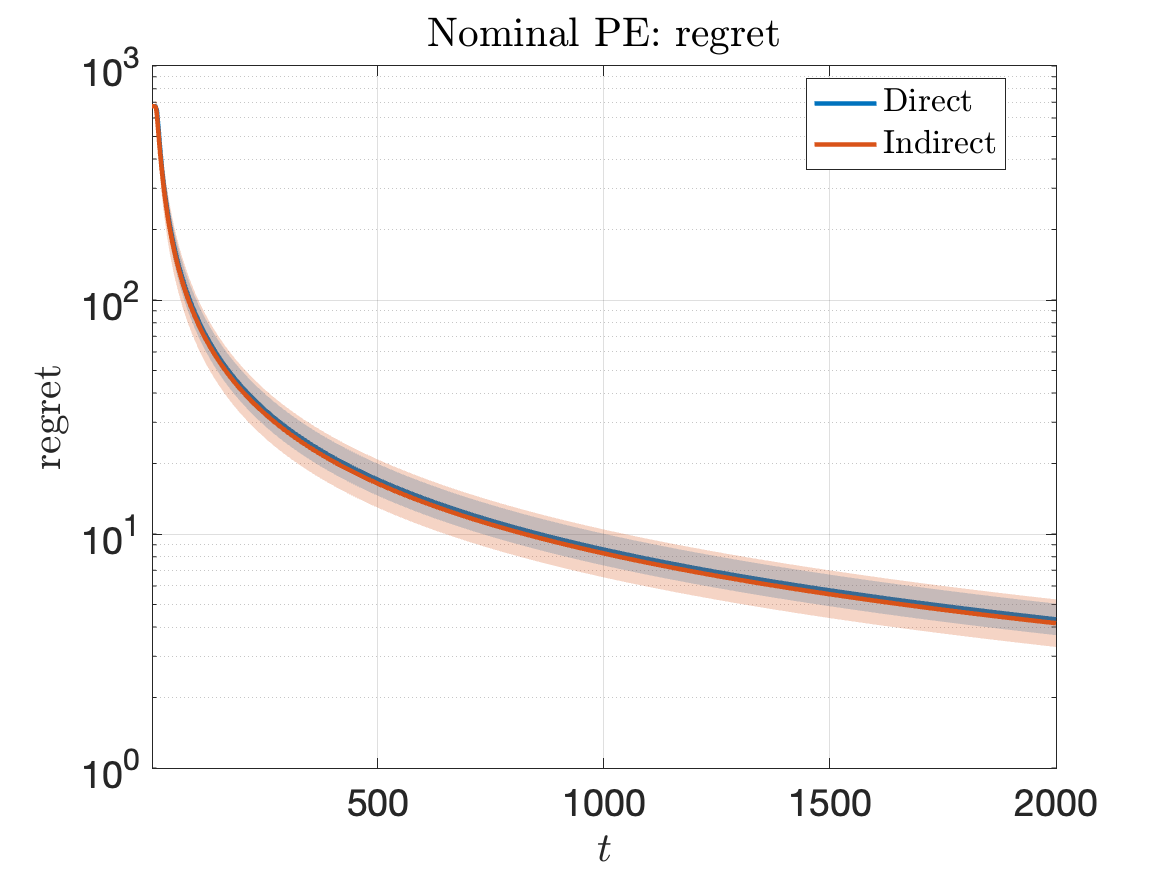}
        \caption{Regrets under the direct and indirect formulations.}
        \label{fig:hinf-reg}
    \end{subfigure}
    \hfill
    \begin{subfigure}[t]{0.7\linewidth}
        \centering
        \includegraphics[width=\linewidth]{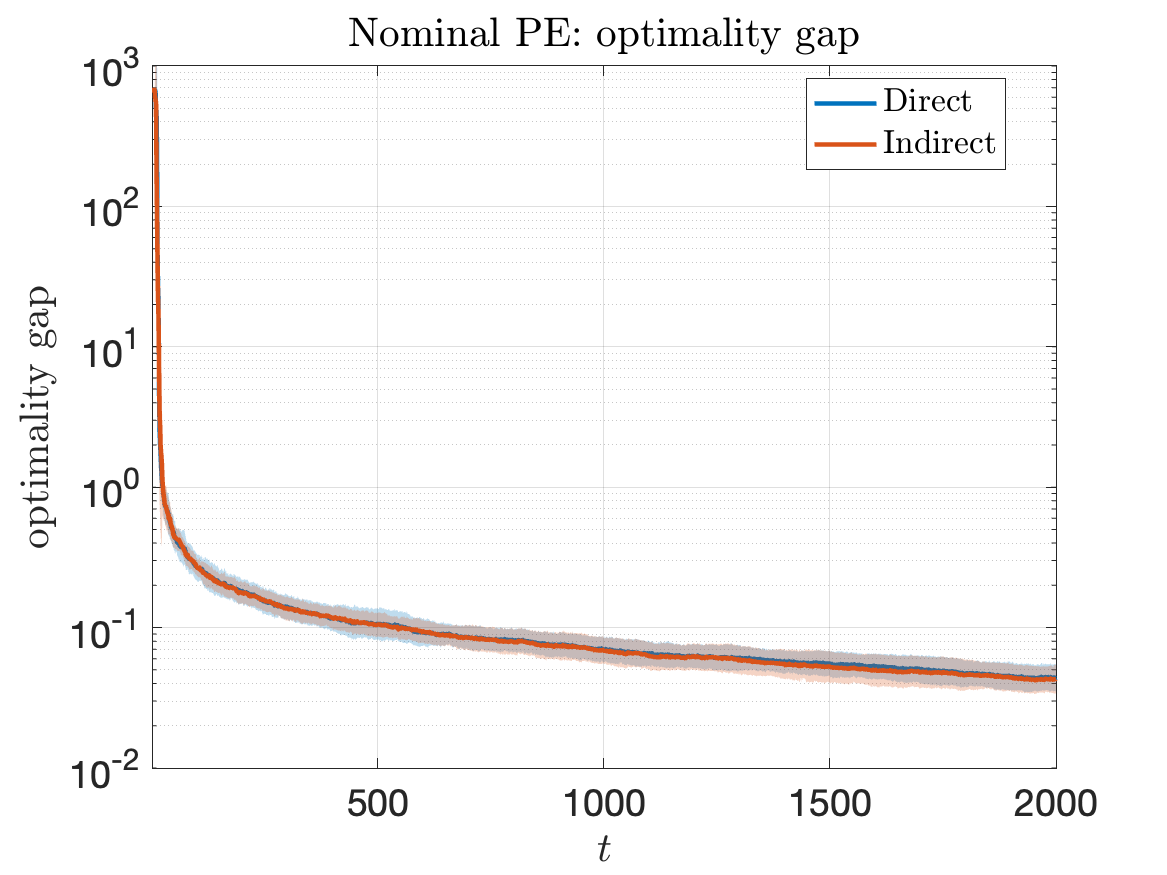}
        \caption{Optimality gaps under the direct and indirect formulations.}
        \label{fig:hinf-opt-gap}
    \end{subfigure}
    \caption{Convergence behavior for the $\mathcal{H}_\infty$ control problem, where the shaded areas represent one standard deviation over 20 Monte Carlo trials.}
    \label{fig:hinf-comparison}
\end{figure}


Another interesting observation is that for randomly generated system matrices $K_{t_0}$ is not a stabilizing gain in all cases. This is reflected in these figures as the $\mathcal{H}_{\infty}$ norm is $\infty$ initially, and set to $1000$. However, our method successfully stabilizes the system after finitely many iterations. This is demonstrated in Figure \ref{fig:spectral}. It can be seen that the spectral radius of $A+BK_{t_0}$ is greater than one, but decreases below one after around 20 iterations. The latter together with a bounded variation of $K_t$ is usually used to certify sequential strong stability \cite{cohen2018online}.

\begin{figure}[!htbp]
    \centering
    \includegraphics[width=0.7\linewidth]{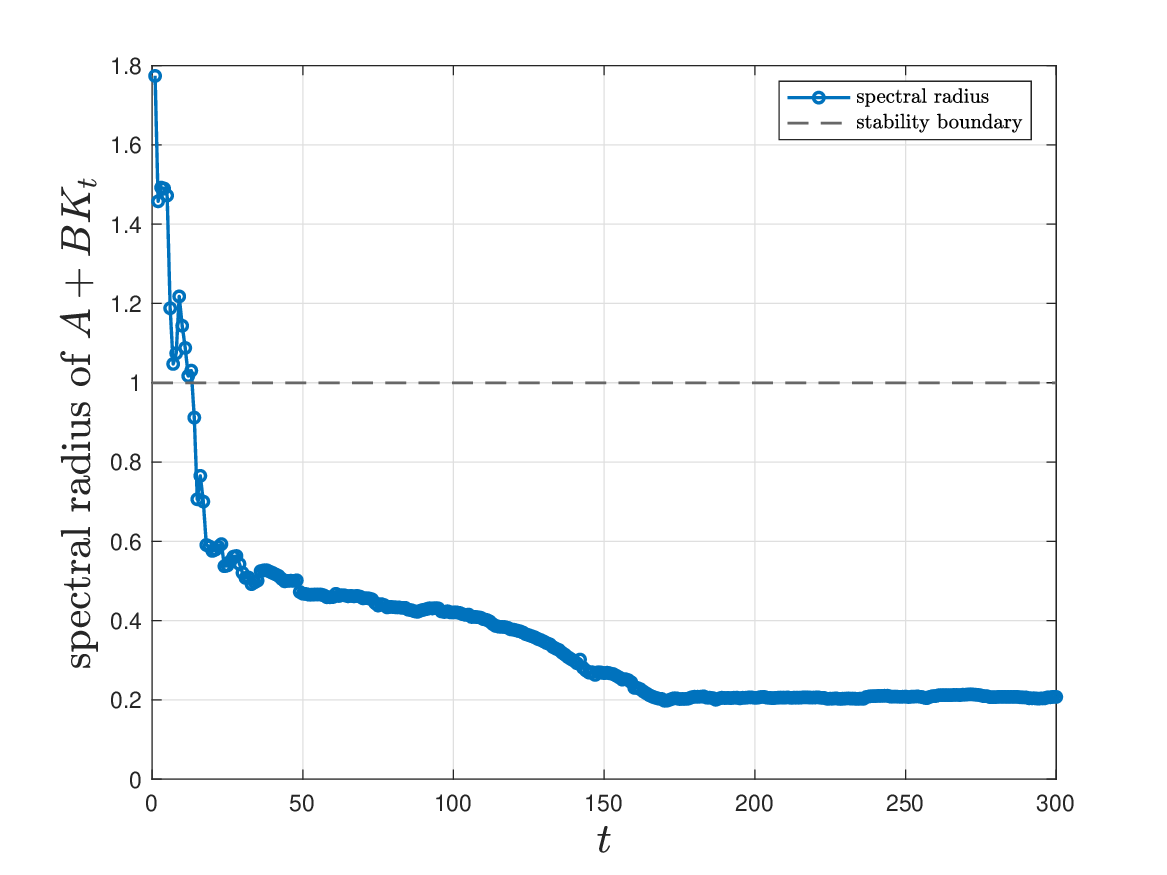}
    \caption{Spectral radius of the closed-loop system $A+BK_t$.}
    \label{fig:spectral}
\end{figure}

\subsection{Safe Exploration}
\label{sec:safety}
Another problem of interest is control under safety constraints $x_t\in\mathcal{X}$, $\forall t\ge 0$. We begin with the model-based formulation. The set $\mathcal{X}$ is assumed to be compact and to have nonempty interior:
\begin{equation}
    \begin{aligned}
        \mathcal{X}:=\bigcap_{i=1}^{N_\mathcal{X}}\{x\in\mathbb{R}^n:a_i^\top x+1\ge 0\},
    \end{aligned}
\end{equation}
where $a_i\in\mathbb{R}^n$, $i\in\{1,\ldots,N_\mathcal{X}\}$, are real vectors. One tractable sufficient condition for safety is to find a positively invariant set $\mathcal{P}\subseteq \mathcal{X}$ such that $\forall x_t\in \mathcal{P}$, we have $x_{t+1}\subseteq \mathcal{P}$. Consider an ellipsoidal parameterization
\begin{equation*}
    \mathcal{P}:=\{x\in\mathbb{R}^n:x^\top Px-\alpha\le 0\},
\end{equation*}
where $P\succ 0$ and $\alpha>0$. Using the model-based SDP formulation \cite{fochesato2025synthesis}, we propose the following direct data-driven SDP formulation for safe control:
\begin{equation}\label{eq:safety-program}
    \begin{aligned}
        \max_{Y}~&\mathrm{Tr}(\overline{X}_0Y)\\
        \mathrm{subject~to}~&
        \begin{bmatrix}
            \overline{X}_0Y&Y^\top \overline{X}_1\\
            \overline{X}_1Y&\overline{X}_0Y
        \end{bmatrix}\succeq 0,\\
        &
        \begin{bmatrix}
            1&x_0^\top\\
            x_0&\overline{X}_0Y
        \end{bmatrix}\succeq 0,\\
        &
        \begin{bmatrix}
            \overline{X}_0Y&\overline{X}_0Ya_i\\
            a_i^\top Y^\top \overline{X}_0^\top &1
        \end{bmatrix}\succeq 0,\quad i=1,\ldots,N_{\mathcal{X}},
    \end{aligned}
\end{equation}
where $K=\overline{U}_0Y(\overline{X}_0Y)^{-1}$.

In the experiment, the system dynamics are given by
\begin{equation}
A=
\begin{bmatrix}
    3&3\\
    2&2
\end{bmatrix},
\quad
B=
\begin{bmatrix}
    1&0\\
    0&3
\end{bmatrix}.
\end{equation}
The safe set is defined by a box with $a_1=[0\quad 0.2]^\top$, $a_2=[0\quad -0.2]^\top$, $a_3=[0.2\quad 0]^\top$, and $a_4=[-0.2\quad 0]^\top$. The process and probing noises are both Gaussian, with $w_t\sim \mathcal{N}(0,0.1I_2)$ and $e_t\sim \mathcal{N}(0,0.1I_2)$. The length of the offline data is $t_0=6$. In this setting, since the noise is unbounded, only probabilistic safety guarantees can be obtained. For a more detailed analysis, interested readers are referred to \cite{fochesato2025synthesis}.

For comparison, we compare the closed-loop trajectory of the system controlled by Algorithm \ref{al:online-control-first-order} with that controlled by a standard LQR controller with $Q=R=I_2$. Simulation results are provided in Figure \ref{fig:safety}. It can be seen that the LQR controller steers the system outside the box, while our method ensures safety for all time.

\begin{figure}[!htbp]
    \centering
    \includegraphics[width=0.8\linewidth]{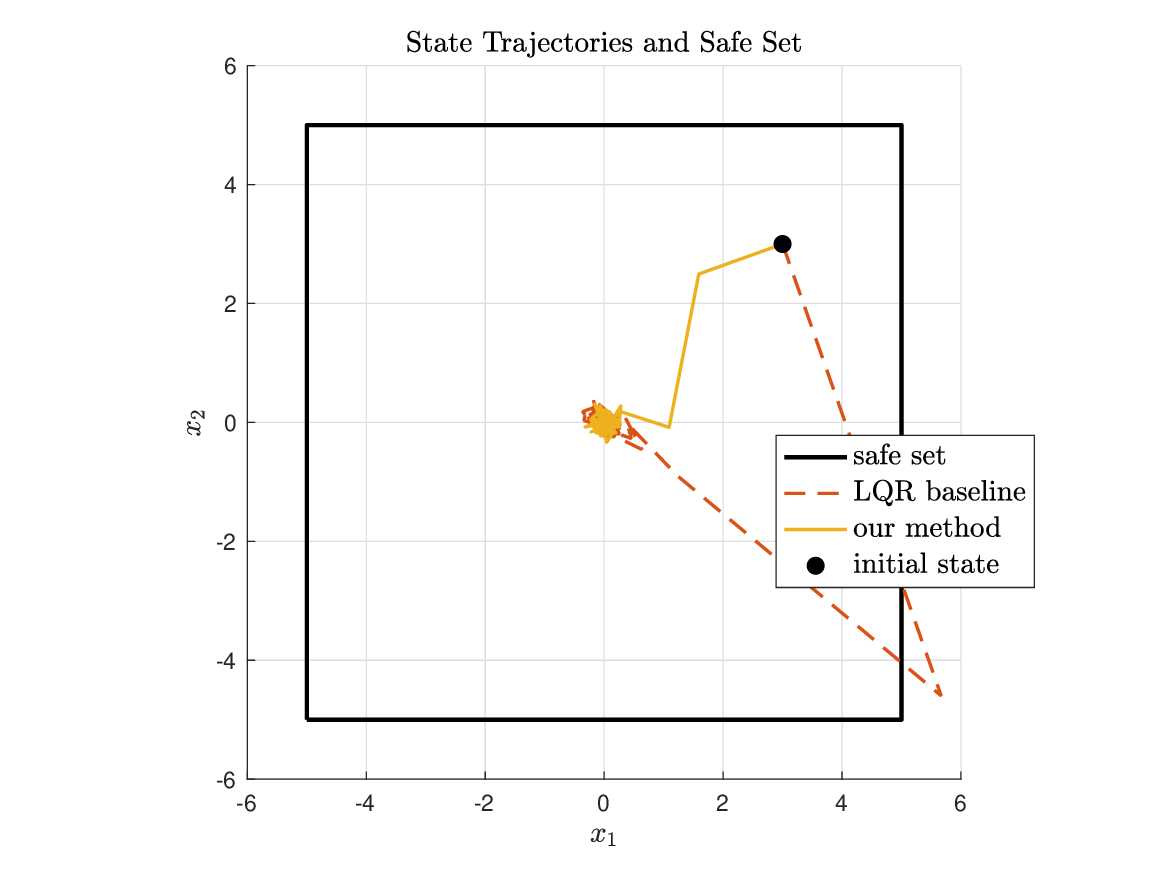}
    \caption{Comparison between our method and the LQR baseline.}
    \label{fig:safety}
\end{figure}





\section{Conclusion}

This paper proposed a data-enabled primal-dual framework for online learning optimal control policies via SDP formulations. By treating the data-driven synthesis problem as a time-varying SDP, the proposed method recursively updates the policy using lightweight primal-dual iterations rather than repeatedly solving full SDPs. The framework unifies direct and indirect data-driven formulations and applies to LQR, $\mathcal{H}_\infty$ control, and safety-critical control with state constraints. We introduced the Difference-of-Signal and Sim-to-Real Gap to characterize the data-driven SDP, and established local linear tracking and global ergodic convergence under persistency of excitation and standard SDP regularity conditions. Numerical results demonstrated improved control performance, support for general SDP constraints beyond policy-gradient methods, and substantial computational savings for recursive data-driven policy learning.

\section{Acknowledgement}
The authors acknowledge Dr. Yuwen Chen from Nvidia and Prof. Jaap Eising from the University of Groningen for their valuable inputs.

\appendix

\subsection{Proof of Lemma \ref{lem:dos-evolution}}

\begin{proof}
We first analyze $\mathrm{DOS}_{t}^d$ on the direct setting. Using Assumption \ref{ass:linear}, we have $\mathrm{DOS}_{t}^d=\|\mathcal{L}(D_t)-\mathcal{L}(D_{t-1})\|_F\le L_{\mathrm{DOS}}^{\mathrm d}
    \left\|\left[\overline X_{1,t}-\overline X_{1,t-1} \quad  \overline X_{0,t}-\overline X_{0,t-1} \quad \overline U_{0,t}-\overline U_{0,t-1}\right]
    \right\|_F$ for some $L_{\mathrm{DOS}}^d>0$. It holds that $\overline{X}_{1,t}-\overline{X}_{1,t-1}=\frac{1}{t}(x_{t}D_{t-1}^\top -\overline{X}_{1,t-1})=\frac{1}{t}(x_{t}D_{t-1}^\top - \frac{1}{t-1}\sum_{k=0}^{t-2}x_{k+1}D_{k}^\top)$. To quantify $\|\overline{X}_{1,t}-\overline{X}_{1,t-1}\|_F$, it suffices to quantify $\|x_{k+1}D_{k}^\top \|_F$. By Cauchy--Schwarz, we have $\mathbb{E}\|x_{k+1}D_{k}^\top \|_F\le c\left(\mathbb{E}\|x_{k+1}\|^2\right)^{1/2}\left(\mathbb{E}\|D_k\|^2\right)^{1/2}\le C$ for some $c$ and $C$. Here, we have used the facts that i) $\mathbb{E}\|x_k\|^2<\infty$ from closed-loop stability \eqref{eq:second-moment}; and ii) $\mathbb{E}\|u_k\|^2=\mathbb{E}\|K_kx_k+e_k\|^2 < \infty$. Therefore, by Markov's inequality, there is $C>0$ so that for any $\beta >0$:
\begin{equation*}\mathbb{P}\left(\|x_{k+1}D_{k}^\top \|_F > \beta \right)\le \frac{C}{\beta}
\end{equation*}
This implies that $\|\overline{X}_{1,t}-\overline{X}_{1,t-1}\|_F=\mathcal{O}_p(1/t)$. Similarly, we can prove that $\|\overline{X}_{0,t}-\overline{X}_{0,t-1}\|_F=\mathcal{O}_p(1/t)$ and $\|\overline{U}_{0,t}-\overline{U}_{0,t-1}\|_F=\mathcal{O}_p(1/t)$, which then proves \eqref{eq:direct-dos}. 

Under Assumption \ref{ass:pe}, the ordinary least-squares problem \eqref{eq:ols} admits a unique solution $(\hat B_t, \hat A_t)=\overline{X}_{1,t}\Phi_t^{-1}$. Then, Assumption \ref{ass:linear} implies that $$\mathrm{DOS}_t^i=\|\mathcal{L}(\hat A_t,\hat B_t)-\mathcal{L}(\hat A_{t-1},\hat B_{t-1})\|_F \le L_{\mathrm{DOS}}^i\left\|
        \overline X_{1,t}\Phi_{t}^{-1}
        -
        \overline X_{1,t-1}\Phi_{t-1}^{-1}
    \right\|_F$$ for some $L_{\mathrm{DOS}}^i>0$. The rest of the proof for \eqref{eq:indirect-dos} follows a similar procedure to that for the direct formulation, and is therefore omitted. 
\end{proof}

\subsection{Proof of Lemma \ref{lem:srg-evolution}}

\begin{proof}
    We first analyze $\mathrm{SRG}_t^d$ in the direct setting. Similar to the analysis for $\mathrm{DOS}_t^d$ in the proof of Lemma \ref{lem:dos-evolution}, it can be shown that $\mathrm{SRG}_t^d \le L_{\mathrm{SRG}}^d \|\overline{W}_{0,t}\|_F$ for some $L_{\mathrm{SRG}}^d>0$. From the expression of $\|\overline{W}_{0,t}\|_F$, we have 
    \begin{equation}
        \begin{aligned}
\mathbb E\|\overline{W}_{0,t}\|_F^2=&\mathbb E\|\frac{1}{t}\sum _{k=0}^{t-1}w_kD_k^\top \|_F^2\\
=&
\frac{1}{t^2}
\sum_{k=0}^{t-1}
\mathbb E\|w_kD_k^\top\|_F^2
+
\frac{1}{t^2}
\sum_{\substack{i,j=0\\i\neq j}}^{t-1}
\mathbb E\operatorname{tr}
\left(
w_iD_i^\top D_jw_j^\top
\right).
        \end{aligned}
    \end{equation}
Since $w_k\sim \mathcal{N}(0,I_n)$, we have $\mathbb{E}\|w_k\|^2=n$. Also, $D_k$ is generated by the data up to time $k$, while $w_k$ is the new process noise at time $k$. Hence, $w_k$ is independent of $D_k$. For $i\ne j$, the cross term vanishes due to $w_k$ being i.i.d. Therefore, $\mathbb E\|\overline W_{0,t}\|_F^2
=
\frac{1}{t^2}
\sum_{k=0}^{t-1}
\mathbb E\|w_kD_k^\top\|_F^2.$ Since $\|w_kD_k^\top\|_F^2
\le
\|w_k\|^2\|D_k\|^2$, and $w_k$ is independent of $D_k$, we have 
\begin{equation}
    \mathbb E\|w_kD_k^\top\|_F^2
\le
\mathbb E\|w_k\|^2 \, \mathbb E\|D_k\|^2
\le
nC_D,
\end{equation}
where $C_D:=\sup_{k\ge0}\mathbb E\|D_k\|^2<\infty .$ Hence, we have $\mathbb E\|\overline W_{0,t}\|_F
\le
\sqrt{\frac{1}{t^2}
\sum_{k=0}^{t-1} nC_D}
=
\sqrt{\frac{nC_D}{t}}.$ By Markov's inequality, for any $\beta>0$ we have 
\begin{equation}
    \mathbb P
\left(
\sqrt t\,\|\overline W_{0,t}\|_F>\beta
\right)
\le
\frac{t\,\mathbb E\|\overline W_{0,t}\|_F^2}{\beta^2}
\le
\frac{nC_D}{\beta^2}.
\end{equation}
Thus $\sqrt t\,\|\overline W_{0,t}\|_F
=
\mathcal{O}_p(1)$, which is equivalent to $\|\overline W_{0,t}\|_F
=
\mathcal{O}_p\!\left(\frac{1}{\sqrt t}\right).$ This proves \eqref{eq:direct-srg}. 

The proof for the indirect case \eqref{eq:indirect-srg} follows a similar procedure, by noting that $\|\Phi_t^{-1}\|_F=\mathcal{O}_p(1)$. 
\end{proof}

\subsection{Proof of Theorem \ref{th:opt-gap-linear}}
\begin{proof}
Define a square matrix
    \begin{equation}\label{eq:def-Z}
H_t:=\operatorname{vec}^{-1}\!\left(s_t-\lambda_t/\rho\right),
    \end{equation}
Using this definition, let $H^*_t$ be the matrix that corresponds to $s^*(A_{1,t})$ and $\lambda^*(A_{1,t})$. Given that every $t=t_0,\ldots,T$ Assumption \ref{ass:regularity} holds, then \cite[Theorem 3]{kang2025local} shows that for $H^*_t$, there exist $\varepsilon^{H}_t>0$ and $\sigma_t\in(0,1)$ such that
\begin{equation}\label{eq:q-contraction}
    \|H_{t+1}-H^*_t\|_F\le \sigma_t\|H_t-H_t^*\|_F,\quad \forall H_t\in\mathcal{B}(H_t^*,\varepsilon^{H}_t)
\end{equation}

The key difference between the present setting and the static setting in \cite{kang2025local} is that $A_{1,t}$, and hence the fixed point $H_t^*$, varies with time. Therefore, \eqref{eq:q-contraction} cannot be applied directly as a convergence result to a fixed point. Instead, we use it as a local tracking estimate and quantify the effect of the movement of $H_t^*$.

Define the minimal radius and maximal contraction rate by 
\begin{equation}\label{eq:quantity-def}
    \varepsilon_{T}^H:=\min_{t\in\{t_0,\dots,t_0+T\}}\varepsilon_t^H,\quad \sigma_{T}:=\max_{t\in\{t_0,\dots,t_0+T\}}\sigma_t
\end{equation}
Then, it holds that $\varepsilon_{T}^H>0$ and $\sigma_{T}\in(0,1)$. 
From \eqref{eq:q-contraction} and \eqref{eq:quantity-def}, we deduce that 
\begin{equation}\label{eq:set-inclusion}
\begin{aligned}
     &H_{t+1}\in \mathcal{B}(H_t^*,\sigma_t\varepsilon_T^H)\subseteq \mathcal{B}(H_t^*,\sigma_{T}\varepsilon_{T}^H),\\
     &\forall H_t\in\mathcal{B}(H_t^*,\varepsilon_T^H),t\in\{t_0,\dots,t_0+T-1\}
\end{aligned}
\end{equation}
This is because \eqref{eq:q-contraction} holds on $\mathcal{B}(H_t^*,\varepsilon_T^{H})\subseteq \mathcal{B}(H_t^*,\varepsilon_t^{H})$.

Using this set inclusion relation \eqref{eq:set-inclusion}, we seek to find the lower bound $\varepsilon_{T}^{\mathrm{DOS}}$ for $\mathrm{DOS}_t$ such that $H_{t}\in\mathcal{B}(H_t^*,\varepsilon_{T}^H),\forall t\in\{t_0,\dots,t_0+T\}$. Note that $\mathrm{DOS}_t$ affects both $H_t^*$ and $H_t$ in the relations. 

From \eqref{eq:def-Z}, we can deduce the following relation
\begin{equation*}
\begin{aligned}
     \|H_t^*-H_{t+1}^*\|_F\le &\|s_t^*-s_{t+1}^*\|_F+\|\lambda_t^*-\lambda_{t+1}^*\|_F/\rho\\
     \le &\|\omega_{t}^*-\omega_{t+1}^*\|_F+\|\omega_t^*-\omega_{t+1}^*\|_F/\rho
\end{aligned}
\end{equation*}
Using this relation and Lemma \ref{lem:smoothness}, we obtain that
\begin{equation*}
\begin{aligned}
    &\|H_t^*-H_{t+1}^*\|_F \le (1+1/\rho) L\|A_{1,t+1}-A_{1,t}\|_F\\
    &=(1+1/\rho)L\mathrm{DOS}_{t+1}\le (1+1/\rho)L\varepsilon_T^{\mathrm{DOS}}
\end{aligned}
\end{equation*}

Define
\begin{equation}\label{eq:snr-bound}
\varepsilon_T^{\mathrm{DOS}} = \frac{(1-\sigma_T)\varepsilon_T^H}{(1+1/\rho)L}
\end{equation}
and substitute the value of $\varepsilon_{T}^{\mathrm{DOS}}$ into the above inequality. We obtain
\begin{equation}\label{eq:center-distance}
    \|H_{t}^*-H_{t+1}^*\|\le (1-\sigma_{T})\varepsilon_{T}^H,\quad \forall t\in\{t_0,\dots,t_0+T-1\}
\end{equation}
From this inequality, we deduce that 
\begin{equation}\label{eq:second-last-inequality}
    \mathcal{B}(H_t^*,\sigma_{T} \varepsilon_{T}^H)\subseteq \mathcal{B}(H_{t+1}^*,\varepsilon_{T}^H)\subseteq \mathcal{B}(H_{t+1}^*,\varepsilon_{t+1}^H)
\end{equation}
The first set inclusion relation follows from the fact that the distance between the centers of the two balls $\mathcal{B}(H_t^*,\sigma_{T} \varepsilon_{T}^H)$ and $\mathcal{B}(H_{t+1}^*,\varepsilon_{T}^H)$ is less than or equal to $(1-\sigma_{T})\varepsilon_{T}^H$ from \eqref{eq:center-distance}, and $(1-\sigma_{T})\varepsilon_{T}^H+\sigma_{T}\varepsilon_{T}^H=\varepsilon_{T}^H$. The second set inclusion relation follows from the fact that $\varepsilon_{t+1}^H \ge \varepsilon_{T}^H$. 

Combining \eqref{eq:second-last-inequality} and \eqref{eq:set-inclusion}, we obtain
\begin{equation}
\begin{split}
    &H_{t+1}\in \mathcal{B}(H_{t+1}^*,\varepsilon_{t+1}^H),\quad \\
    &\forall H_t\in\mathcal{B}(H_t^*,\varepsilon_t^H),t\in\{t_0,\dots,t_0+T-1\}
\end{split}
\end{equation}
Therefore, if $H_{t_0}\in\mathcal{B}(H_{t_0}^*,\varepsilon_{t_0}^H)$, we have $H_{t+1}\in \mathcal{B}(H_{t+1}^*,\varepsilon_{t+1}^H)$ for any $t\in\{t_0,\dots,t_0+T-1\}$. This means that $H_t$ remains in the local region where the local contraction property \eqref{eq:q-contraction} holds, which can then be used to bound $H_t$. From the relation between $\omega$ and $H$ in \eqref{eq:def-Z}, it suffices to have $H_{t_0}\in\mathcal{B}(H_{t_0}^*,\varepsilon_{t_0}^H)$ from $\omega_{t_0}\in\mathcal{B}(\omega_{t_0}^*,\varepsilon_T^\omega)$ for some $\varepsilon_T^\omega>0$. 

Using the contraction property \eqref{eq:q-contraction}, we are now ready to analyze the optimality gap. For $t\in\{t_0,\dots,t_0+T-1\}$, using \eqref{eq:q-contraction} and replacing $\sigma_t$ with $\sigma_{T}$ implies:
\begin{equation}\label{eq:th2eq2}
    \begin{aligned}
    &\|H_{t+1}-H_{t+1}^*\|_F\le \|H_{t+1}-H_t^*\|_F+\|H_t^*-H_{t+1}^*\|_F\\
    &\le \sigma_T\|H_{t}-H_t^*\|_F+\|H_t^*-H_{t+1}^*\|_F\\
    &\le \sigma_T\|H_t-H_t^*\|_F+(1+1/\rho)\|\omega_t^*-\omega_{t+1}^*\|_F\\
    &\le \sigma_T\|H_{t}-H_t^*\|_F+L(1+1/\rho)\mathrm{DOS}_{t+1}
    \end{aligned}
\end{equation}
Here, the first inequality is obtained via the standard triangle inequality, and the second inequality follows from the contraction \eqref{eq:q-contraction} after replacing $\sigma_t$ by the upper bound $\sigma_T$. The third inequality comes from the definition of $H_t$ in \eqref{eq:def-Z}, and the fourth one is obtained by using Lemma \ref{lem:smoothness} and recalling the definition of $\mathrm{DOS}_t$ in \eqref{eq:dos}.

Using the comparison lemma and considering the corresponding proportional sequence, we immediately obtain 
\begin{equation}\label{eq:th2eq3}
\begin{aligned}
        &\|H_{t+1}-H^*_{t+1}\|_F\le \sigma_{T}^{t+1-t_0} \nu_{1}+\nu_{2}\sum_{k=t_0}^{t+1}\sigma_{T}^{t-k}\mathrm{DOS}_k,\\
        &\forall t\in\{t_0,\dots,t_0+T-1\}
\end{aligned}
\end{equation}
for some $0<\nu_{1},\nu_{2}<\infty$. Under the assumption that $\mathrm{DOS}_t\le \varepsilon_{T}^{\mathrm{DOS}}$, we immediately obtain that $\|H_{t+1}\|_F$ is bounded for any $t\in[t_0,t_0+T-1]$. 

To establish local linear convergence of the optimality gap, we first use the inequality \eqref{eq:th2eq3} to prove boundedness of $(z_t,s_t,\lambda_t)$. From the definition of $H_t$, the Moreau decomposition \cite[Chapter 2.5]{parikh2014proximal} shows that
\begin{equation}
    s_t=\mathrm{vec}(\Pi_{\mathbb{S^+}}(H_t)),\quad \lambda_t=\rho\mathrm{vec}(\Pi_{\mathbb{S}^+}(-H_t)) 
\end{equation}
From the non-expansiveness of the cone projection operator, we immediately obtain that both $\|\lambda_t\|_F$ and $\|s_t\|_F$ are bounded, because they satisfy 
\begin{equation}\label{eq:56}
    \|\lambda_t-\lambda_{t}^*\|_F\le  \|H_{t}-H_t^*\|_F/\rho,\quad \|s_t-s_t^*\|_F\le \|H_{t}-H_t^*\|_F
\end{equation}
Using \eqref{eq:th2eq3}, we obtain that
\begin{equation}\label{eq:slambda}
    \begin{aligned}
        &\|\lambda_t-\lambda_t^*\|_F \le \sigma_T^{t-t_0}\nu_1/\rho + \nu_2\sum_{k=t_0}^t \sigma_T^{t-k}\mathrm{DOS}_k/\rho\\
        &\|s_t-s_t^*\|_F \le \sigma_T^{t-t_0}\nu_1+\nu_2\sum_{k=t_0}^t \sigma_T^{t-k}\mathrm{DOS}_k
    \end{aligned}
\end{equation}

We then characterize $z_t$. For problem \eqref{eq:dee-sdp2}, the $z$-update in \textbf{Step 3} reads as
\begin{equation*}
    z_{t+1}=\mathop{\arg\min}_z~c^\top z+\frac{\rho}{2}\|A_{1,t}z+A_2s_t-b-\lambda_t/\rho\|_F^2\end{equation*}
The optimality condition implies 
\begin{equation}\label{eq:th2eq4}
    z_{t+1}
=
(\rho A_{1,t}^\top A_{1,t})^{-1}(\rho A_{1,t}^\top(b-A_2s_t)+A_{1,t}^\top\lambda_t-c).
\end{equation}
From Assumption \ref{ass:pe}, we have that $A_{1,t}$ always has full column rank and $\underline{\sigma}(A_{1,t})\ge \underline{\gamma}$, which implies $\|(A_{1,t}^\top A_{1,t})^{-1}\|_2\le \underline{\gamma}^{-2}$. This implies boundedness of $z_{t+1}$. 

We then calculate the upper bound for $\|z_{t+1}-z^*_{t+1}\|_F$. Using \eqref{eq:th2eq4}, we have
\begin{equation*}
    \rho A_{1,t}^\top A_{1,t}z_{t+1}
=
\rho A_{1,t}^\top(b-A_2s_t)+A_{1,t}^\top\lambda_t-c.
\end{equation*}
Subtracting $\rho A_{1,t}^\top A_{1,t}z^*_{t+1}$ from both sides, we have:
\begin{equation}\label{eq:th2eq5}
\begin{aligned}
    &\rho A_{1,t}^\top A_{1,t}(z_{t+1}-z^*_{t+1})=\rho A_{1,t}^\top(b-A_2s_t-A_{1,t}z^*_{t})\\
    &+A_{1,t}^\top\lambda_t-c- \rho A_{1,t}^\top A_{1,t}(z^*_{t+1}-z^*_t)
\end{aligned}
\end{equation}
The term $b-A_2s_t-A_{1,t}z^*_{t}$ can be rewritten as
\begin{equation*}
    b-A_2s_t-A_{1,t}z^*_{t}=A_2(s_t^*-s_t)
\end{equation*}
where we have used primal feasibility $A_{1,t}z_t^*+A_2s_t^*=b$. Substituting the above equation into \eqref{eq:th2eq5}, we obtain
\begin{equation}
\begin{aligned}
    &\rho A_{1,t}^\top A_{1,t}(z_{t+1}-z^*_{t+1})=\\
    &\rho A_{1,t}^\top A_2(s_t^*-s_t)-\rho A_{1,t}^\top A_{1,t}(z_{t+1}^*-z_t^*)+A_{1,t}^\top\lambda_t-c
\end{aligned}
\end{equation}
Finally, using the target stationarity condition $c=A_{1,t}^\top \lambda^*_{t}$ yields
\begin{equation*}
    A_{1,t}^\top \lambda_t-c=A_{1,t}^\top (\lambda_t-\lambda_t^*)
\end{equation*}
Substituting this into \eqref{eq:th2eq5}, we obtain 
\begin{equation}\label{eq:60}
\begin{aligned}
    &\rho A_{1,t}^\top A_{1,t}(z_{t+1}-z^*_{t+1})=\\
    &\rho A_{1,t}^\top A_2(s_t^*-s_t)-\rho A_{1,t}^\top A_{1,t}(z_{t+1}^*-z_t^*)-A_{1,t}^\top (\lambda_t^*-\lambda_t)
\end{aligned}
\end{equation}

Using \eqref{eq:slambda}, and considering $\|z_{t+1}^*-z_t^*\|_F\le \|\omega_{t+1}^*-\omega_t^*\|_F\le L\|A_{1,t+1}-A_{1,t}\|_F=L\mathrm{DOS}_{t+1}$, and recalling that $\|(A_{1,t}^\top A_{1,t})^{-1}\|_2\le \underline{\gamma}^{-2}$, we obtain the following bound
\begin{equation}\label{eq:z-convergence}
\begin{aligned}
     &\|z_{t+1}-z^*_{t+1}\|_F\le \nu_{3}\sigma_{T}^{t+1-t_0}+\nu_{4}\sum_{k=t_0}^{t+1}\sigma_{T}^{t-k}\mathrm{DOS}_k+\nu_{5}\mathrm{DOS}_{t+1}\\
     & \le \nu_{3}\sigma_{T}^{t+1-t_0}+\nu_{6}\sum_{k=t_0}^{t+1}\sigma_{T}^{t-k}\mathrm{DOS}_k,\quad \forall t\in\{t_0,\dots,t_0+T-1\}
\end{aligned}
\end{equation}
for some $0<\nu_3,\nu_4,\nu_5,\nu_6<\infty$. We are now ready to quantify the optimality gap between the runtime cost $f_1(z_{t+1})$ and the optimal cost $f_1(\tilde z^*_{t+1})$, which corresponds to the noiseless data $A_{1,t+1}^{\mathrm{nl}}$. We can then quantify $c^\top (z_{t+1}^*-\tilde z^*_{t+1})$ by SRG:
\begin{equation}\label{eq:46}
\begin{aligned}
      &|c^\top (z_{t+1}^*-\tilde z^*_{t+1})|=|c^\top (z_{t+1}^*-\tilde{z}_{t+1}^*)|\\
      &\le\|c\|_FL\|A_{1,t+1}-{A}_{1,t+1}^{\mathrm{nl}}\|_F\le C_3\mathrm{SRG}_{t+1}
\end{aligned}
\end{equation}
for some $0<C_3<\infty$. The first inequality is obtained by using Lemma \ref{lem:smoothness}. The second inequality is obtained via Assumption \ref{ass:linear}, which assumes that $A_{1,t+1}$ is linear in $\bar {X}_{1,t+1}$. The difference between $A_{1,t+1}$ and ${A}_{1,t+1}^{\mathrm{nl}}$ is therefore linear in $\mathrm{SRG}_{t+1}$.

Therefore, from \eqref{eq:z-convergence} and \eqref{eq:46}, we have
\begin{equation}\label{eq:64}
\begin{aligned}
    &|f_1(z_{t+1})-f_1(z^*)| \le \|c\|_F\nu_{3}\sigma_{T}^{t+1-t_0}+\\
    &\|c\|_F\nu_{6}\sum_{k=t_0}^{t+1}\sigma_{T}^{t-k}\mathrm{DOS}_k+C_3\mathrm{SRG}_{t+1}\\
    &\forall t\in\{t_0,\dots,t_0+T-1\}
\end{aligned}
\end{equation}
By choosing $C_1=\|c\|_F\nu_{3}$ and $C_2=\|c\|_F\nu_{6}$, we obtain \eqref{eq:opt-gap-linear}. Here, we have used the fact that $f_2(s_{t+1})=f_2^*=0$, because of \textbf{Step 4} of Algorithm \ref{al:online-control-first-order}.
\end{proof}



\subsection{Proof of Proposition \ref{prop:constraint}}
\begin{proof}
    Let 
    \begin{equation}
        r_{t+1}={A}^{\mathrm{nl}}_{1,t}z_{t+1}+A_2s_{t+1}-b
    \end{equation}
We decompose $r_{t+1}=( A_{1,t}^{\mathrm{nl}}-A_{1,t})z_{t+1}+A_{1,t}z_{t+1}+A_2s_{t+1}-b$ and apply the triangular inequality:
\begin{equation}\label{eq:asd}
    \|r_{t+1}\|_F\le \|A_{1,t}-A_{1,t}^{\mathrm{nl}}\|_F\|z_{t+1}\|_F+\|A_{1,t}z_{t+1}+A_2s_{t+1}-b\|_F
\end{equation}
From \textbf{step 5} of Algorithm \ref{al:online-control-first-order}, we have $A_{1,t}z_{t+1}+A_2s_{t+1}-b=\frac{1}{\rho}(\lambda_{t+1}-\lambda_t)$. Using this into \eqref{eq:asd}, we obtain 
\begin{equation}\label{eq:residual}
    \|r_{t+1}\|_F
\le
\|A_{1,t}^{\mathrm{nl}}-A_{1,t}\|_F\,\|z_{t+1}\|_F
+
\frac{1}{\rho}\|\lambda_{t+1}-\lambda_t\|_F
\end{equation}
We then bound the two terms separately. For the first term, using \eqref{eq:z-convergence}, we have 
\begin{equation}\label{eq:bound1}
\begin{aligned}
        &\|A_{1,t}^{\mathrm{nl}}-A_{1,t}\|_F\|z_{t+1}\|_F
\le\\
&\mathrm{SRG}_t\|z^*_{t+1}\|_F
+
\nu_{3}\sigma_{T}^{t+1-t_0}+\nu_{6}\sum_{k=t_0}^{t+1}\sigma_{T}^{t-k}\mathrm{DOS}_k
\end{aligned}
\end{equation}

For the second term, it holds that
\begin{equation}\label{eq:bound2}
\begin{split}
        &\|\lambda_{t+1}-\lambda_t\|_F
\le
\|\lambda_{t+1}-\lambda_t^\star\|_F+\|\lambda_t-\lambda_t^\star\|_F
\le\\
&
\|H_{t+1}-H_t^\star\|_F/\rho + \|H_t-H_t^\star\|_F/\rho\\
&\le \frac{1+\sigma_T}{\rho}\left(\sigma_{T}^{t-t_0} \nu_{1}+\nu_{2}\sum_{k=t_0}^t\sigma_{T}^{t-k}\mathrm{DOS}_k\right) 
\end{split}
\end{equation}
The first inequality is standard triangular inequality, the second one comes from \eqref{eq:56} and the third one uses \eqref{eq:q-contraction} and \eqref{eq:th2eq3}. 

Using the inequalities in \eqref{eq:bound1} and \eqref{eq:bound2} into \eqref{eq:residual}, we obtain \eqref{eq:50} for some $0<C_5,C_6,C_7<\infty$. 
\end{proof}

\subsection{Proof of Theorem \ref{th:opt-gap}}

The following lemma is the key preliminary result for convergence, showing that that the sequence $s_t$ and $\lambda_t$ can be upper bounded by $\mathrm{DOS}_t$.  

\begin{lemma}\label{lem:boundedness}
    Let Assumptions \ref{ass:pe} and \ref{ass:regularity} hold. Then, for any $T\ge t_0$, there exist constants $0<\beta_1,L_\mathrm{norm}<\infty$ such that
    \begin{equation}\label{eq:lem4}
        \|s_{T+1}\|_F,\|\lambda_{T+1}\|_F\le \beta_1+L_{\mathrm{norm}}\sum_{t=t_0}^T \mathrm{DOS}_t
    \end{equation}
\end{lemma}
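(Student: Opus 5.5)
The plan is to use the global nonexpansiveness of the ADMM/Douglas--Rachford iteration written in the variable $H_t=\operatorname{vec}^{-1}(s_t-\lambda_t/\rho)$ from \eqref{eq:def-Z}, instead of the local contraction \eqref{eq:q-contraction}. For a fixed $A_{1,t}$, Steps 3--5 of Algorithm \ref{al:online-control-first-order} can be written as $H_{t+1}=\mathcal{T}_t(H_t)$. Here $\mathcal{T}_t$ is the Douglas--Rachford (equivalently, ADMM) operator associated with \eqref{eq:dee-sdp2} for the data $A_{1,t}$. This operator is firmly nonexpansive, and every fixed point of it is a KKT point. This holds globally and needs no regularity assumption beyond the existence of a KKT point, which \cite{wen2010alternating} guarantees. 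Assumption \ref{ass:pe} makes the $z$-subproblem strongly convex, so $\mathcal{T}_t$ is well defined and single-valued. It follows that
\[
\|H_{t+1}-H_t^*\|_F\le \|H_t-H_t^*\|_F ,
\]
where $H_t^*$ is built from $s^*(A_{1,t})$ and $\lambda^*(A_{1,t})$.

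The next step is the same tracking argument as in \eqref{eq:th2eq2}, but with contraction factor $1$:
\[
\|H_{t+1}-H_{t+1}^*\|_F\le \|H_t-H_t^*\|_F+\|H_t^*-H_{t+1}^*\|_F\le \|H_t-H_t^*\|_F+(1+1/\rho)L\,\mathrm{DOS}_{t+1}.
\]
The last inequality uses Lemma \ref{lem:smoothness}. Telescoping from $t_0$ to $T$ gives
\[
\|H_{T+1}-H_{T+1}^*\|_F\le \|H_{t_0}-H_{t_0}^*\|_F+(1+1/\rho)L\sum_{t=t_0}^{T+1}\mathrm{DOS}_t .
\]
The final term $\mathrm{DOS}_{T+1}$ can be absorbed using the boundedness coming from compactness of $\mathcal{A}$ (indeed $\mathrm{DOS}\le 2\,\mathrm{diam}\,\mathcal{A}$). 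Alternatively, I can index the recursion so that only $\mathrm{DOS}_{t_0},\dots,\mathrm{DOS}_T$ appear, since $H_{T+1}$ depends only on $A_{1,t_0},\dots,A_{1,T}$; in that case I compare with $H_T^*$ in the final step.

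To bound the norms, I use the Moreau decomposition $s_t=\operatorname{vec}(\Pi_{\mathbb S_+}(H_t))$ and $\lambda_t=\rho\operatorname{vec}(\Pi_{\mathbb S_+}(-H_t))$, together with nonexpansiveness of the projection. This gives $\|s_{T+1}\|_F\le\|H_{T+1}\|_F$ and $\|\lambda_{T+1}\|_F\le\rho\|H_{T+1}\|_F$. Moreover, $\|H_{T+1}\|_F\le \|H_{T+1}-H_{T}^*\|_F+\sup_{\tilde A_1\in\mathcal A}\|H^*(\tilde A_1)\|_F$. The supremum is finite because, under Assumption \ref{ass:regularity}, the KKT map is locally Lipschitz (hence continuous) on the compact set $\mathcal A$. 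Combining these bounds yields \eqref{eq:lem4} with
\[
\beta_1=\max(1,\rho)\Bigl(\|H_{t_0}-H_{t_0}^*\|_F+\sup_{\mathcal A}\|H^*\|_F\Bigr),\qquad L_{\mathrm{norm}}=\max(1,\rho)(1+1/\rho)L .
\]

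The main obstacle is justifying the global nonexpansiveness of $\mathcal{T}_t$ in the $H$-coordinates with the correct metric. Because the $z$-step involves $A_{1,t}$, the standard ADMM Fej\'er monotonicity is usually stated in the norm $\|(A_2 s,\lambda)\|$, with weights $\rho$ and $1/\rho$. Since $A_2=-I$ here, this norm is equivalent to the norm of $s-\lambda/\rho$ up to constants depending on $\rho$. I would invoke the Douglas--Rachford equivalence of ADMM applied to the dual problem \cite{wen2010alternating}, which gives firm nonexpansiveness exactly in the variable $H$. A second subtlety is that the fixed point $H_t^*$ must be unique for Lemma \ref{lem:smoothness} to apply; strict complementarity and nondegeneracy in Assumption \ref{ass:regularity} provide this uniqueness. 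If only Fej\'er monotonicity in a $\rho$-weighted norm can be established, the argument goes through unchanged, and only the constants $\beta_1$ and $L_{\mathrm{norm}}$ change.
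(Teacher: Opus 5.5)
Your proposal is correct and follows essentially the same route as the paper: nonexpansiveness of the frozen ADMM step toward the KKT point of $A_{1,t}$, a triangle inequality that absorbs the drift of the KKT point via Lemma~\ref{lem:smoothness} and $\mathrm{DOS}$, telescoping, and then reading off bounds on $s$ and $\lambda$. The differences are minor: the paper uses He--Yuan's Fej\'er inequality in the seminorm $\|\omega-\omega^*\|_G^2=\rho\|s-s^*\|_F^2+\|\lambda-\lambda^*\|_F^2/\rho$ (your stated fallback) rather than the Douglas--Rachford variable $H$ with the Moreau decomposition; in addition, your comparison with $H_T^*$ in the last step, combined with $\sup_{\mathcal A}\|H^*\|_F$, yields exactly $\sum_{t=t_0}^T\mathrm{DOS}_t$ with a $T$-uniform $\beta_1$, whereas the paper's telescoping produces $\sum_{t=t_0}^T\mathrm{DOS}_{t+1}$ and treats $\|\omega^*(A_{1,T+1})\|_G$ as bounded without further justification.
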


\begin{proof}
The proof builds upon the non-expansiveness of the fixed-iteration operator.
Define
\begin{equation*}
    G:=\begin{bmatrix}
        0&0&0\\0&\rho A_2^\top A_2&0\\0&0&\frac{1}{\rho}I
    \end{bmatrix}, \omega^*(A_{1,t}):=\begin{bmatrix}
        z^*(A_{1,t})\\
        s^*(A_{1,t})\\
        \lambda^*(A_{1,t})

    \end{bmatrix},\omega_t:=\begin{bmatrix}
        z_t\\s_t\\\lambda_t
    \end{bmatrix}
\end{equation*}
From \cite[Lemma 3.2]{he20121}, it holds that
\begin{equation*}
    \|\omega_{t+1}-\omega^*(A_{1,t})\|_G^2\le\|\omega_t-\omega^*(A_{1,t})\|_G^2-\|\omega_{t}-\omega_{t+1}\|_G^2
\end{equation*}
Dropping the term $\|\omega_t-\omega_{t+1}\|_G^2$, we have 
\begin{equation}\label{eq:contractivity}
    \|\omega_{t+1}-\omega^*(A_{1,t})\|_G\le\|\omega_t-\omega^*(A_{1,t})\|_G
\end{equation}
From the triangle inequality for Euclidean norms, we have that
\begin{equation*}
\begin{aligned}
    &\|\omega_{t+1}-\omega^*(A_{1,t+1})\|_G\\
    \le&\|\omega_{t+1}-\omega^*(A_{1,t})\|_G+\|\omega^*(A_{1,t+1})-\omega^*(A_{1,t})\|_G
\end{aligned}
\end{equation*}
Combining this relation with inequality \eqref{eq:contractivity}, we have 
\begin{equation*}
\begin{aligned}
    &\|\omega_{t+1}-\omega^*(A_{1,t+1})\|_G\le\|\omega_t-\omega^*(A_{1,t})\|_G\\
    &+\|\omega^*(A_{1,t+1})-\omega^*(A_{1,t})\|_G
\end{aligned}
\end{equation*}
Summing it from $t=t_0$ to $t=T$ yields:
\begin{equation}\label{eq:lem4eq1}
\begin{aligned}
    &\|\omega_{T+1}-\omega^*(A_{1,T+1})\|_G\\
    \le &\|\omega_{t_0}-\omega^*(A_{1,t_0})\|_G+\sum_{t=t_0}^T\|\omega^*(A_{1,t+1})-\omega^*(A_{1,t})\|_G\\
\end{aligned}
\end{equation}
From Lemma \ref{lem:smoothness} and the equivalence of Euclidean norms, we have 
\begin{equation*}
\begin{aligned}
    &\|\omega^*(A_{1,t+1})-\omega^*(A_{1,t})\|_G\le\\ &\nu_1\|A_{1,t+1}-A_{1,t}\|_F=\nu_1 \mathrm{DOS}_{t+1}
\end{aligned} 
\end{equation*}
for some $0<\nu_1<\infty$. Using the above inequality to replace the last term in \eqref{eq:lem4eq1} yields:
\begin{equation}\label{eq:some-equation}
\begin{aligned}
    &\|\omega_{T+1}-\omega^*(A_{1,T+1})\|_G\\
    \le &\|\omega_{t_0}-\omega^*(A_{1,t_0})\|_G+\nu_1\sum_{t=t_0}^T\mathrm{DOS}_{t+1}
\end{aligned}
\end{equation}

Noticing that $\omega_{t_0}$, $\omega^*(A_{1,t_0})$, and $\omega^*(A_{1,T+1})$ are all bounded, there exists $0<\beta_1<\infty$ such that 
\begin{equation}\label{eq:lem4eq3}
    \|\omega^*(A_{1,T+1})\|_G+\|\omega_{t_0}-\omega^*(A_{1,t_0})\|_G\le \beta_1
\end{equation}
Using the triangle inequality for \eqref{eq:some-equation} and considering \eqref{eq:lem4eq3}, we obtain
\begin{equation}
    \|\omega_{T+1}\|_G\le 
    \beta_1+\nu_1\sum_{t=t_0}^T\mathrm{DOS}_{t+1}
\end{equation}

From the definition of $\omega$ and $\|\cdot\|_G$, we deduce that \eqref{eq:lem4} holds for all $T\ge t_0$. This concludes the proof.
\end{proof}

We then provide the proof for Theorem \ref{th:opt-gap}.

\begin{proof}
Under the new variables $s$, $z$ and $\lambda$,  \textbf{Steps 3}, \textbf{4} and \textbf{5} can be expressed as
\begin{equation}\label{eq:new-primal-dual}
\begin{aligned}
    z_{t+1}=&\mathop{\arg\min}_{z}~f_1(z)+\frac{\rho}{2}\left\|A_{1,t}z+A_2s_{t}-b-\frac{\lambda_t}{\rho}\right\|_F^2\\
    s_{t+1}=&\mathop{\arg\min}_s f_2(s)+\frac{\rho}{2}\left\|A_{1,t}z_{t+1}+A_2s-b-\frac{\lambda_t}{\rho}\right\|_F^2\\
    \lambda_{t+1}=&\lambda_t-\rho\left(A_{1,t}z_{t+1}+A_2s_{t+1}-b\right)
\end{aligned}
\end{equation}
From the optimality condition of \eqref{eq:new-primal-dual}, we have that
\begin{equation*}
\begin{aligned}
    &(z-z_{t+1})^\top \left[\partial f_1(z_{t+1})-A_{1,t}^\top\lambda_t\right.\\
   &~~~~~\left.+\rho A_{1,t}^\top \left(A_{1,t}z_{t+1}+A_2s_t-b\right)\right] \ge 0,\forall z\\
&(s-s_{t+1})^\top \left[\partial f_2(s_{t+1})-A_{2}^\top\lambda_t\right.\\
   &~~~~~\left.+\rho A_{2}^\top \left(A_{1,t}z_{t+1}+A_2s_{t+1}-b\right)\right] \ge 0,\forall s\\
\end{aligned}
\end{equation*}
Replacing $\lambda_{t}$ by $\lambda_{t+1}+\rho(A_{1,t}z_{t+1}+A_2s_{t+1}-b)$ in the above inequalities, we obtain the following
\begin{equation}\label{eq:optmality-condition}
\begin{aligned}
        &(z-z_{t+1})^\top \left[\partial f_1(z_{t+1})-A_{1,t}^\top\lambda_{t+1}\right.\\
   &~~~~~\left.+\rho A_{1,t}^\top A_2\left(s_{t}-s_{t+1}\right)\right] \ge 0,\forall z\\
&(s-s_{t+1})^\top \left[\partial f_2(s_{t+1})-A_{2}^\top\lambda_{t+1}\right]\ge 0,\forall s  \end{aligned}
\end{equation}
Using the convexity of $f_1(\cdot)$ and $f_2(\cdot)$, we have
\begin{equation}\label{eq:convexity}
\begin{aligned}
     &f_1(z_{t+1})-f_1(z_T^*)+f_2(s_{t+1})-f_2(s_T^*)\\
     \le &\partial f_1(z_{t+1})^\top (z_{t+1}-z_T^*)+\partial f_2(s_{t+1})^\top (s_{t+1}-s_T^*)
    \end{aligned}
\end{equation}
for some $z_T^*$ and $s_T^*$. Letting $z=z_T^*$ and $s=s_T^*$ in \eqref{eq:optmality-condition}, and using the convexity condition \eqref{eq:convexity}, we obtain
\begin{equation}\label{eq:??}
    \begin{aligned}
        &f_1(z_{t+1})-f_1(z_T^*)+f_2(s_{t+1})-f_2(s_T^*)\\
        \le &(-A_{1,t}^\top \lambda_{t+1}+\rho A_{1,t}^\top A_2(s_t-s_{t+1}))^\top (z_T^*-z_{t+1})\\
        &+(-A_2^\top \lambda_{t+1})^\top (s_T^*-s_{t+1})
    \end{aligned}
\end{equation}
Now consider the term $(-A_2^\top \lambda_{t+1})^\top (s_T^*-s_{t+1})$. Using the relations $A_2s_T^*=b-A_{1,T}z_T^*$ and $A_2s_{t+1}=\frac{1}{\rho}(\lambda_t-\lambda_{t+1})-(A_{1,t}z_{t+1}-b)$, it holds that
\begin{equation*}
\begin{aligned}
    &(-A_2^\top \lambda_{t+1})^\top (s_T^*-s_{t+1})\\
    =&\frac{1}{\rho}(\lambda_t-\lambda_{t+1})^\top \lambda_{t+1}+\lambda_{t+1}^\top (A_{1,T}z_T^*-A_{1,t}z_T^*)
\end{aligned}
\end{equation*}
Substituting this into \eqref{eq:??}, we obtain
\begin{equation*}
    \begin{aligned}
        &f_1(z_{t+1})-f_1(z^*)+f_2(s_{t+1})-f_2(s^*)\\
        \le &\frac{1}{\rho}(\lambda_{t}-\lambda_{t+1})^\top \lambda_{t+1}+\lambda_{t+1}^\top (A_{1,T}z^*-A_{1,t}z^*)\\
        &+\rho \left[(-A_2s_{t+1})-(-A_2s_{t})\right]^\top \left[(A_{1,t}z^*-b)-(A_{1,t}z_{t+1}-b)\right]
    \end{aligned}
\end{equation*}
Using the transformation
\begin{equation*}
\begin{aligned}
    &\left[(-A_2s_{t+1})-(-A_2s_{t})\right]^\top \left[(A_{1,t}z^*-b)-(A_{1,t}z_{t+1}-b)\right]\\
    =&\frac{1}{2}\left(\|A_{1,t}z_{t+1}+A_2s_{t+1}-b\|_F^2+\|A_{1,t}z^*+A_2s_t-b\|_F^2\right)\\
    -&\frac{1}{2}\left(\|A_{1,t}z^*+A_2s_{t+1}-b\|_F^2+\|A_{1,t}z_{t+1}+A_2s_t-b\|_F^2\right)
\end{aligned}
\end{equation*}
and $\|A_{1,t}z_{t+1}+A_2s_{t+1}-b\|_F^2=\frac{1}{\rho^2}\|\lambda_t-\lambda_{t+1}\|_F^2$, it holds
\begin{equation*}
    \begin{aligned}
        &f_1(z_{t+1})-f_1(z_T^*)+f_2(s_{t+1})-f_2(s_T^*)\\
        \le& \frac{1}{\rho}(\lambda_{t}-\lambda_{t+1})^\top \lambda_{t+1}+\\
        &\lambda_{t+1}^\top (A_{1,T}z_T^*-A_{1,t}z_T^*)+\frac{1}{2\rho}\|\lambda_t-\lambda_{t+1}\|_F^2\\
        +&\frac{\rho}{2}\left(\|A_{1,t}z_T^*+A_2s_t-b\|_F^2-\|A_{1,t}z_T^*+A_2s_{t+1}-b\|_F^2\right)
    \end{aligned}
\end{equation*}
The inequality is obtained after eliminating the term $-\frac{1}{2}\|A_{1,t}z_{t+1}+A_2s_t-b\|_F^2$. 

Thus, for any $\lambda$, it holds that
\begin{equation}\label{eq:!!}
\begin{aligned}
    &f_1(z_{t+1})-f_1(z_T^*)+f_2(s_{t+1})-f_2(s_T^*)\\
    &-\lambda^\top (A_{1,t}z_{t+1}+A_2s_{t+1}-b)\\
    \le &\frac{1}{\rho}(\lambda_{t+1}-\lambda)^\top (\lambda_t-\lambda_{t+1})\\
    &+\lambda_{t+1}^\top (A_{1,T}-A_{1,t})z_T^*+\frac{1}{2\rho}\|\lambda_t-\lambda_{t+1}\|_F^2\\
    +&\frac{\rho}{2}\left(\|A_{1,t}z_T^*+A_2s_t-b\|_F^2-\|A_{1,t}z_T^*+A_2s_{t+1}-b\|_F^2\right)
\end{aligned}
\end{equation}
Considering $\frac{\rho}{2}(\|A_{1,t}z_T^*+A_2s_t-b\|_F^2-\|A_{1,t}z_T^*+A_2s_{t+1}-b\|_F^2)$, it holds that 
\begin{equation}\label{eq:noname}
    \begin{aligned}
        &\|A_{1,t}z_T^*+A_2s_t-b\|_F^2-\|A_{1,t}z_T^*+A_2s_{t+1}-b\|_F^2\\
        =&\|A_{1,t-1}z^*+A_2s_t-b+(A_{1,t}-A_{1,t-1})z^*\|_F^2\\
        &-\|A_{1,t}z_T^*+A_2s_{t+1}-b\|_F^2\\
        \le &\|A_{1,t-1}z_T^*+A_2s_t-b\|_F^2-\|A_{1,t}z_T^*+A_2s_{t+1}-b\|_F^2\\
        &+\|(A_{1,t}-A_{1,t-1})z_T^*\|_F^2\\
        &+2\|A_{1,t-1}z_T^*+A_2s_t-b\|_F\| (A_{1,t}-A_{1,t-1})z_T^*\|_F
    \end{aligned}
\end{equation}
for any $t\ge 1$. 
The second equality is obtained by expanding the first term, and the inequality is obtained by using the submultiplicativity of vector norms.
Applying the above inequality and using
\begin{equation*}
\begin{aligned}
    &\lambda^\top (A_{1,t}z_{t+1}+A_2s_{t+1}-b)\\
    =&\lambda^\top (A_{1,T}^{\mathrm{nl}}z_{t+1}+A_2s_{t+1}-b)+\lambda^\top (A_{1,t}-A_{1,T}^{\mathrm{nl}})z_{t+1}
\end{aligned}
\end{equation*}
in \eqref{eq:!!}, we obtain
\begin{equation}\label{eq:long-ieq}
    \begin{aligned}
    &f_1(z_{t+1})-f_1(z_T^*)+f_2(s_{t+1})-f_2(s_T^*)\\
    &-\lambda^\top (A_{1,T}^{\mathrm{nl}}z_{t+1}+A_2s_{t+1}-b)\\
    \le &\frac{1}{2\rho}\left(\|\lambda-\lambda_t\|_F^2-\|\lambda-\lambda_{t+1}\|_F^2\right)+\lambda_{t+1}^\top (A_{1,T}^{\mathrm{nl}}-A_{1,t})z_T^*\\
    &+\frac{\rho}{2}\left(\|A_{1,t-1}z_T^*+A_2s_t-b\|_F^2-\|A_{1,t}z_T^*+A_2s_{t+1}-b\|_F^2\right)\\
    &+\frac{\rho}{2}\left(\|(A_{1,t}-A_{1,t-1})z_T^*\|_F^2\right.\\
        &\left.+2\|A_{1,t-1}z_T^*+A_2s_t-b\|_F\| (A_{1,t}-A_{1,t-1})z_T^*\|_F\right)\\
        &+\|\lambda\|_F \|(A_{1,t}-A_{1,T}^{\mathrm{nl}})z_{t+1}\|_F
    \end{aligned}
\end{equation}
Define 
\begin{equation}
    \overline{z}_T:=\frac{1}{T-t_0+1}\sum_{t=t_0}^T z_t, \quad \overline{s}_T:=\frac{1}{T-t_0+1}\sum_{t=t_0}^Ts_t
\end{equation}
Then, summing \eqref{eq:long-ieq} over $t=t_0,\ldots,T$ yields
\begin{equation}\label{eq:??!!}
    \begin{aligned}
        &f_1(\overline{z}_T)-f_1(z_T^*)+f_2(\overline{s}_T)-f_2(s_T^*)-\lambda^\top ( A_{1,T}^{\mathrm{nl}}\overline{z}_T+A_2\overline{s}_T-b)\\
        &\le \frac{1}{2\rho(T-t_0+1)}\|\lambda-\lambda_{t_0}\|_F^2+\\
        &\frac{\rho}{2(T+1)}\|A_{1,t_0}z_T^*+A_2s_{t_0}-b\|_F^2\\
        &+\frac{1}{T-t_0+1}\left\{\sum_{t=t_0+1}^T\left[\vphantom{\sum_{t=t_0}^T}\underbrace{\|\lambda_{t+1}\|_F\|A_{1,T}^{\mathrm{nl}}-A_{1,t}\|_F\|z_T^*\|_F}_{\text{\ding{172}}}\right.\right.\\
        &+\underbrace{\frac{\rho}{2}\|(A_{1,t}-A_{1,t-1})z_T^*\|_F^2}_{\text{\ding{173}}}+\underbrace{\|\lambda\|_F\|A_{1,T}^{\mathrm{nl}}-A_{1,t}\|_F\|z_{t+1}\|_F}_{\text{\ding{174}}}\\
        &\left.+\underbrace{\rho \|A_{1,t-1}z_T^*+A_2s_t-b\|_F\| A_{1,t}-A_{1,t-1}\|_F \|z_T^*\|_F\vphantom{\sum_{t=0}^T}}_{\text{\ding{175}}}\right]\\
        &+\|A_{1,t_0}z_T^*+A_2s_{t_0}-b\|_F^2+\|\lambda_{1}\|_F\|A_{1,T}^{\mathrm{nl}}-A_{1,t_0}\|_F\|z_T^*\|_F\\
        &\left.+\|\lambda\|_F\|A_{1,T}^{\mathrm{nl}}-A_{1,t_0}\|_F\|z_T^*\|_F\vphantom{\sum_{t=t_0}^T}\right\}
    \end{aligned}
\end{equation}

Here, we have used the fact that 
\begin{equation}
    f_1(\overline{z}_T)+f_2(\overline{s}_T)=\sum_{t=t_0}^T \left(f_1(z_t)+f_2(s_t)\right)
\end{equation}
This is because $f_1$ is linear, and $f_2(s_t)=0$ for all $t\ge t_0$ because of the projection in \textbf{Step 4} of Algorithm \ref{al:online-control-first-order}.

The sum on the right-hand side of \eqref{eq:long-ieq} is taken from $t_0+1$ to $T$ because of the inequality \eqref{eq:noname}, which holds for $t\ge t_0+1$. For $t=t_0$, the terms are presented outside the summation $\sum_{t=t_0+1}^T$. To analyze the right-hand side of the inequality in \eqref{eq:??!!}, we separately consider the terms \ding{172}, \ding{173}, \ding{174} and \ding{175} inside $\frac{1}{T-t_0+1}\sum_{t=t_0+1}^T$.

\ding{172}: 
The term $\| {A}_{1,T}^{\mathrm{nl}}-A_{1,t}\|_F$ can be bounded by
\begin{equation*}
\begin{aligned}
    &\|{A}_{1,T}^{\mathrm{nl}}-A_{1,t}\|_F\le \|{A}_{1,T}^{\mathrm{nl}}-A_{1,T}\|_F+\|A_{1,T}-A_{1,t}\|\\
    &\le\mathrm{SRG}_T+\beta_2
\end{aligned}
\end{equation*}
where $\beta_2:=2\max_{A_1\in \mathcal{A}} \|A_1\|_F$. Under the assumption that $\mathcal{A}$ is compact and $A_{1,t}\in\mathcal{A},\forall t\ge t_0$, we have $\beta_2<\infty$. 
Using Lemma \ref{lem:boundedness} for $\|\lambda_{t+1}\|_F$, we obtain the following inequality
\begin{equation}\label{eq:bound172}
    \begin{aligned}
        &\|\lambda_{t+1}\|_F\|A_{1,T}^{\mathrm{nl}}-A_{1,t}\|_F\|z_T^*\|_F\le  \\
        &\left(\beta_1+\nu_1\sum_{k=t_0}^t\mathrm{DOS}_k\right)\left(\mathrm{SRG}_T+\beta_2\right)\beta_3
    \end{aligned}
\end{equation}
where $\beta_3:=\|z^*_T\|_F$.

\ding{173}: We have
\begin{equation}\label{eq:bound173}
    \begin{aligned}
        &\frac{\rho}{2}\|(A_{1,t}-A_{1,t-1})z_T^*\|_F^2\le \frac{\rho}{2}\mathrm{DOS}_t^2 \beta_2^2
    \end{aligned}
\end{equation}

\ding{174}: This term appears similarly to \ding{173}; the only difference is between $\|\lambda_{t+1}\|_F$ and $\|\lambda\|_F$. Following \eqref{eq:bound172}, the bound for this term is given by
\begin{equation}\label{eq:bound174}
    \begin{aligned}
        &\|\lambda\|_F\|A_{1,T}^{\mathrm{nl}}-A_{1,t}\|_F\|z_{t+1}\|_F\le  \\
        &\|\lambda\|_F\left(\mathrm{SRG}_T+\beta_2\right)\left(\beta_1+\nu_1\sum_{k=t_0}^t\mathrm{DOS}_k\right)
    \end{aligned}
\end{equation}
We note here that the bound in \eqref{eq:??!!} is established for any $\lambda$; the selection of $\lambda$ will be clear in the sequel.

\ding{175}: Consider the first part:
\begin{equation}\label{eq:bound175}
    \begin{aligned}
        &\rho\|A_{1,t-1}z_T^*+A_2s_t-b\|_F\\
        =&\rho\|A_{1,T}z_T^*+A_2s_T^*-b+(A_{1,t-1}-A_{1,T})z^*+A_2(s_t-s_T^*)\|_F\\
        \le &\rho\|(A_{1,t-1}-A_{1,T})z_T^*\|_F+\rho\|A_2(s_t-s_T^*)\|_F\\
        \le &\rho \beta_2\beta_3+\left(\beta_1+\nu_1\sum_{k=t_0}^{t-1} \mathrm{DOS}_{k}\right)+\beta_4  \end{aligned}
\end{equation}
Here, the first inequality is obtained via the triangle inequality and the fact that $A_{1,T}z_T^*+A_2s_T^*-b=0$. For the other quantities, $\rho \|(A_{1,t-1}-A_{1,T})z_T^*\|_F\le \rho \beta_2\beta_3$ because $\|A_{1,t-1}-A_{1,T}\|_F\le 2\max_{A_1\in\mathcal{A}}\|A_1\|_F=\beta_2$ and $\|z_T^*\|_F=\beta_3$. The bound for $\rho \|A_2(s_t-s_T^*)\|_F$ is obtained as $\rho \|A_2(s_t-s_T^*)\|_F\le \rho \|A_2s_t\|_F+\rho \|A_2(s_T^*)\|_F$. Let $\beta_4:=\rho \|A_2(s_T^*)\|_F$. We then obtain the final result \eqref{eq:bound175}. 

We then consider the second part $\|A_{1,t}-A_{1,t-1}\|_F\|z^*_T\|_F$ in \ding{175}. An upper bound can be obtained similarly to \ding{173} as $\mathrm{DOS}_t \beta_2$. Combining this with \eqref{eq:bound175}, the final upper bound for \ding{175} is expressed as
\begin{equation}\label{eq:bound175final}
    \begin{aligned}
\mathrm{DOS}_t\beta_2\left(\rho \beta_2\beta_3+\left(\beta_1+\nu_1\sum_{k=t_0}^{t-1} \mathrm{DOS}_{k}\right)+\beta_4\right)        
    \end{aligned}
\end{equation}

We then consider the remaining terms
in \eqref{eq:??!!}. It is worth mentioning that  \eqref{eq:??!!} holds for any $\lambda$, and $\lambda$ is independent of $\lambda_t$ in the inequality. Consider the specific choice
\begin{equation}\label{eq:lambda-choice}
    \lambda=-\eta ( A_{1,T}^{\mathrm{nl}}\overline{z}_T+A_2\overline{s}_T-b)/\|A_{1,T}^{\mathrm{nl}}\overline{z}_T+A_2\overline{s}_T-b\|_F
\end{equation}
with some $\eta>0$. Then, it holds that $\|\lambda\|_F=\eta$. With this in mind, we can obtain an upper bound $\beta_5/(T+1)$ with $0<\beta_5<\infty$ for these terms. Substituting this, \eqref{eq:bound172}, \eqref{eq:bound173}, \eqref{eq:bound174}, \eqref{eq:bound175final}, and \eqref{eq:lambda-choice} into \eqref{eq:??!!}, we obtain
\begin{equation}\label{eq:heavy}
    \begin{aligned}
        &f_1(\overline{z}_T)-f_1(z_T^*)+f_2(\overline{s}_T)-f_2(s_T^*)+\eta\| A_{1,T}^{\mathrm{nl}}\overline{z}_T+A_2\overline{s}_T-b\|_F\\
        &\le \underbrace{\frac{D_1+D_2\sum_{t=t_0}^T\sum_{k=t_0}^t \mathrm{DOS}_k}{T-t_0+1}+D_3\mathrm{SRG}_T}_{:=S_T}\\
    \end{aligned}
\end{equation}
where $0 < D_1 < \infty$ accounts for the constant terms, and $0 < D_2, D_3 < \infty$ correspond to scaling factors associated with the accumulation of $\mathrm{DOS}_t$ and $\mathrm{SRG}_T$, respectively. Without loss of generality, we assume that $\mathrm{SRG}_T$ is bounded, which allows us to separate its contribution from $\sum_{k=t_0}^t \mathrm{DOS}_k$ in \eqref{eq:bound174}. Similarly, we assume that $\mathrm{DOS}_t$ is bounded, so that its interaction with $\sum_{k=t_0}^t \mathrm{DOS}_k$ can be handled analogously in \eqref{eq:bound175final}. These assumptions simplify the regret bound and make the dependence on $\mathrm{DOS}_t$ and $\mathrm{SRG}_t$ explicit. It can be seen that \eqref{eq:heavy} consists of three parts: one converges sublinearly, one depends on the accumulation of $\mathrm{DOS}_k$, and the other one depends on the noise $\mathrm{SRG}_t$. 

We have now obtained an upper bound for $f_1(\overline{z}_T)+f_2(\overline{s}_T)-f_1(z_T^*)-f_2(s^*_T)$, despite the appearance of $\eta\|\hat A_{1,T}\overline{z}_T+A_2\overline{s}_T-b\|_F$. However, given that this term is always nonnegative, one meaningful upper bound can be derived by eliminating this term, i.e.,
\begin{equation}\label{eq:upper-bound}
    f_1(\overline{z}_T)-f_1(z_T^*)+f_2(\overline{s}_T)-f_2(s_T^*)\le S_T
\end{equation}

We then evaluate a lower bound. Consider dualizing the optimization problem \eqref{eq:dee-sdp2} with $A_{1,T}^{\mathrm{nl}}$:
    \begin{equation*}
        L(z,s,\lambda)=f_1(z)+f_2(s)+\lambda^\top( A_{1,T}^{\mathrm{nl}}z+A_2s-b)
    \end{equation*}
From weak duality, it holds that $L(\hat z_T^*,\hat s_T^*,\lambda)\le L(\hat z_T^*,\hat s_T^*,\hat \lambda_T^*)\le L(z,s,\hat \lambda_T^*)$, where $(\hat z_T^*,\hat s_T^*,\hat \lambda_T^*)$ is the optimal solution that corresponds to $A_{1,T}^{\mathrm{nl}}$, satisfying $A_{1,T}^{\mathrm{nl}}\hat z_T^*+A_2\hat s_T^*=b$. By taking $z=\overline{z}_T$ and $s=\overline{s}_T$, the rightmost inequality implies
\begin{equation}\label{eq:lowerbound1}
    f_1(\hat z_T^*)+f_2(\hat s_T^*)\le f_1(\overline{z}_T)+f_2(\overline{s}_T)+(\hat \lambda_T^*)^\top ( A_{1,T}^{\mathrm{nl}}\overline{z}_T+A_2\overline{s}_T-b)
\end{equation}

Again, using Lemma \ref{lem:smoothness}, we obtain 
\begin{equation}\label{eq:lowerbound2}
\begin{aligned}
    &|f_1(\hat z_T^*)-f_1(z_T^*)|=|c^\top(\hat z_T^*-z_T^*)|\\
    &\le \|c\|_F \|\omega^*(A_{1,T}^{\mathrm{nl}})-\omega^*(A_{1,T})\|_F \\
    &\le \|c\|_FL \|A_{1,T}^{\mathrm{nl}}-A_{1,T}\|_F\le \underbrace{\|c\|_F L }_{:=B_4} \mathrm{SRG}_T
\end{aligned}
\end{equation}
Substituting \eqref{eq:lowerbound2} into \eqref{eq:lowerbound1}, and considering the fact that $f_2(\hat s_T^*)=f_2(\bar s_T)=0$, we have 
\begin{equation}\label{eq:someineq2}
    \begin{aligned}
        &f_1(\overline{z}_T)+f_2(\overline{s}_T)-f_1(z_T^*)-f_2(s_T^*)\\
        \ge & -\|\hat \lambda_T^*\|_F\| A_{1,T}^{\mathrm{nl}}\overline{z}_T+A_2\overline{s}_T-b\|_F-B_4 \mathrm{SRG}_T
    \end{aligned}
\end{equation}
Comparing the upper bound in \eqref{eq:heavy} and the lower bound in \eqref{eq:someineq2}, and taking $\eta=\|\hat \lambda_T^*\|_F+1$, we obtain
\begin{equation}
    \| A_{1,T}^{\mathrm{nl}}\overline{z}_T+A_2\overline{s}_T-b\|_F\le S_T+D_4\mathrm{SRG}_T
\end{equation}
This inequality reveals the convergence rate of the primal residual under noiseless data.
Substituting this inequality into \eqref{eq:someineq2}, we get 
\begin{equation}\label{eq:lower-bound}
    f_1(\overline{z}_T)+f_2(\overline{s}_T)-f_1(z_T^*)-f_2(s_T^*)\ge -(\eta-1)S_T-\eta D_4\mathrm{SRG}_T
\end{equation}
Taking the maximum absolute value of the bounds in \eqref{eq:lower-bound} and \eqref{eq:upper-bound}, and considering the fact that $f_2(s_t)=f_2(s_T^*)=0$, we obtain \eqref{eq:opt-gap} for some $0<B_2,B_3,B_4<\infty$.

\end{proof}


\FloatBarrier
\bibliographystyle{ieeetr}
\bibliography{ref}

\end{document}